\definecolor{mGreen}{rgb}{0,0.6,0}
\definecolor{mGray}{rgb}{0.5,0.5,0.5}
\definecolor{mPurple}{rgb}{0.58,0,0.82}
\definecolor{backgroundColour}{rgb}{0.95,0.95,0.92}
\definecolor{mygreen}{rgb}{0,0.6,0}
\definecolor{mygray}{rgb}{0.5,0.5,0.5}
\definecolor{mymauve}{rgb}{0.58,0,0.82}
\lstdefinestyle{CStyle}{
  backgroundcolor=\color{backgroundColour}, % choose the background color; you must add \usepackage{color} or \usepackage{xcolor}; should come as last argument
  % basicstyle=\footnotesize,        % the size of the fonts that are used for the code
  breakatwhitespace=false,         % sets if automatic breaks should only happen at whitespace
  breaklines=true,                 % sets automatic line breaking
  captionpos=b,                    % sets the caption-position to bottom
  % commentstyle=\color{mygreen},    % comment style
  deletekeywords={...},            % if you want to delete keywords from the given language
  escapeinside={\%*}{*)},          % if you want to add LaTeX within your code
  extendedchars=true,              % lets you use non-ASCII characters; for 8-bits encodings only, does not work with UTF-8
  % firstnumber=1000,                % start line enumeration with line 1000
  % frame=single,	                   % adds a frame around the code
  keepspaces=true,                 % keeps spaces in text, useful for keeping indentation of code (possibly needs columns=flexible)
  % keywordstyle=\color{blue},       % keyword style
  % language=Octave,                 % the language of the code
  morekeywords={*,...},            % if you want to add more keywords to the set
  numbers=left,                    % where to put the line-numbers; possible values are (none, left, right)
  numbersep=5pt,                   % how far the line-numbers are from the code
  numberstyle=\tiny\color{mygray}, % the style that is used for the line-numbers
  rulecolor=\color{black},         % if not set, the frame-color may be changed on line-breaks within not-black text (e.g. comments (green here))
  showspaces=false,                % show spaces everywhere adding particular underscores; it overrides 'showstringspaces'
  % showstringspaces=false,          % underline spaces within strings only
  showtabs=false,                  % show tabs within strings adding particular underscores
  % stepnumber=2,                    % the step between two line-numbers. If it's 1, each line will be numbered
  % stringstyle=\color{mymauve},     % string literal style
  tabsize=2,	                   % sets default tabsize to 2 spaces
  title=\lstname,                   % show the filename of files included with \lstinputlisting; also try caption instead of title
  belowcaptionskip=1\baselineskip,
  % frame=L,
  xleftmargin=\parindent,
  language=C,
  showstringspaces=false,
  basicstyle=\footnotesize\ttfamily,
  keywordstyle=\bfseries\color{green!40!black},
  commentstyle=\itshape\color{purple!40!black},
  identifierstyle=\color{blue},
  stringstyle=\color{orange}
}
\title{An Irredundant Decomposition of Data Flow with Affine Dependences}
\author{Corentin Ferry}
\affiliation{
  \institution{Univ Rennes, CNRS, Inria, IRISA \\ \& Colorado State University}
  \city{Fort Collins, CO}
  \country{USA}}
\email{cferry@mail.colostate.edu}
\author{Steven Derrien}
\affiliation{
  \institution{Univ Rennes, CNRS, Inria, IRISA}
  \city{Rennes}
  \country{France}}
\email{steven.derrien@irisa.fr}
\author{Sanjay Rajopadhye}
\affiliation{
  \institution{Colorado State University}
  \city{Fort Collins, CO}
  \country{USA}}
\email{sanjay.rajopadhye@colostate.edu}
\begin{abstract}
Optimization pipelines targeting polyhedral programs try to maximize the compute
throughput. Traditional approaches favor reuse and temporal locality; while the
communicated volume can be low, failure to optimize spatial locality may cause a
low I/O performance.

Memory allocation schemes using data partitioning such as data tiling can 
improve the spatial locality, but they are domain-specific and rarely applied
by compilers when an existing allocation is supplied. 

In this paper, we propose to derive a partitioned memory allocation for tiled
polyhedral programs using their data flow information. We extend the existing 
MARS partitioning \cite{Ferry_2023} to handle affine dependences, and determine
which dependences can lead to a regular, simple control flow for communications.

While this paper consists in a theoretical study, previous work on data
partitioning in inter-node scenarios has shown performance improvements due to
better bandwidth utilization.
\end{abstract}
\begin{document}

\maketitle

\section{Introduction}

% Blah blah blah that introduces the data flow and the MARS break-up.

The performance of programs is determined by multiple metrics, among which
execution time and energy consumption. One of the main drivers of these two
metrics is data movement: communication latency causes bottlenecks capping the
compute throughput, and inter-chip communication significantly increases the
power consumption. Optimizing data movement is a tedious task that involves
significant modifications to the program; the extent of programs to optimize
warrants automation. Powerful compiler analyses and abstractions have been
developed in this aim, one of the most powerful of which is the polyhedral
model. 

In the polyhedral model, it is possible to entirely determine the execution
sequence of a program, and its data movement. Optimizations are done in two
respects: first, reducing the amount of communication by exhibiting 
\emph{locality}; second, by optimizing the existing communications to reduce
their latency and better utilize the available bandwidth. 

Techniques improving bandwidth/access utilization \cite{Bondhugula_2013,
Dathathri_2013, Ferry_2023} propose to decompose the data flowing between
statements within a program and group together intermediate results based on
their users; coalescing data accesses allows to better utilize the bandwidth.
However, these data flow optimizations are too restrictive, because they omit
all input data. Like it is done for intermediate results, input data transfers
need to be optimized for both locality and memory access performance.

Furthermore, dependences to input variables are rarely \emph{uniform}, because
the data arrays usually have less dimensions than the domain of computation. The
existing dependence-based partitioning works must therefore be extended to
support affine dependences to input variables, and to propose a re-allocation of
these variables.

This paper seeks to extend the partitioning of \cite{Ferry_2023} to handle the
entire data flow of the tile and maximize access contiguity. Its contributions
are as follows:
\begin{itemize}
  \item We propose a partitioning scheme, called Affine-MARS, of data spaces and 
  iteration spaces with a pre-existing tiling,
  \item We formalize the construction of this partitioning scheme and determine 
  its limitations.
\end{itemize}

This paper is organized as follows: Section~\ref{sec:motivation} justifies this
work on partitioning iteration and data spaces; Section~\ref{sec:background}
gives the notions of MARS and the linear algebra concepts used throughout this
work; Section~\ref{sec:construction-of-mars} proposes construction methods for
Affine-MARS according to the dependences; finally,
Section~\ref{sec:related-work} compares this approach to existing iteration- and
data-space partitioning methods.

\section{Motivation}
\label{sec:motivation}

The motivation of this work stems from two driving forces: the necessity to 
exhibit data access contiguity, and the limitations of existing analyses 
preventing efficient (coalesced) memory accesses. 

\subsection{Necessity of spatial locality}

To motivate this work, we can consider a matrix multiplication program. At each
step of its computations, it needs input values ($a_{i,k}$ and $b_{k,j}$), an
intermediate result (partial sum of $c_{i, j}$) and produces a new result. 
Previous work has shown that using loop tiling increases the performance due
to improved locality. When tiling is applied, the matrices are processed in
``patches'' as illustrated in Figure~\ref{fig:matrix-prod}. 

In this application, multiplying matrices $A=(a_{i,j})$ and $B=(b_{i,j})$ is
done by computing all $a_{i, k} \times b_{k, j}$. Loop tiling, for locality, can
be applied and gives a division of the space as in
Figure~\ref{fig:matrix-prod}.

\begin{figure}
\centering
% \begin{subfigure}{0.8\columnwidth}
\includegraphics[width=0.65\columnwidth]{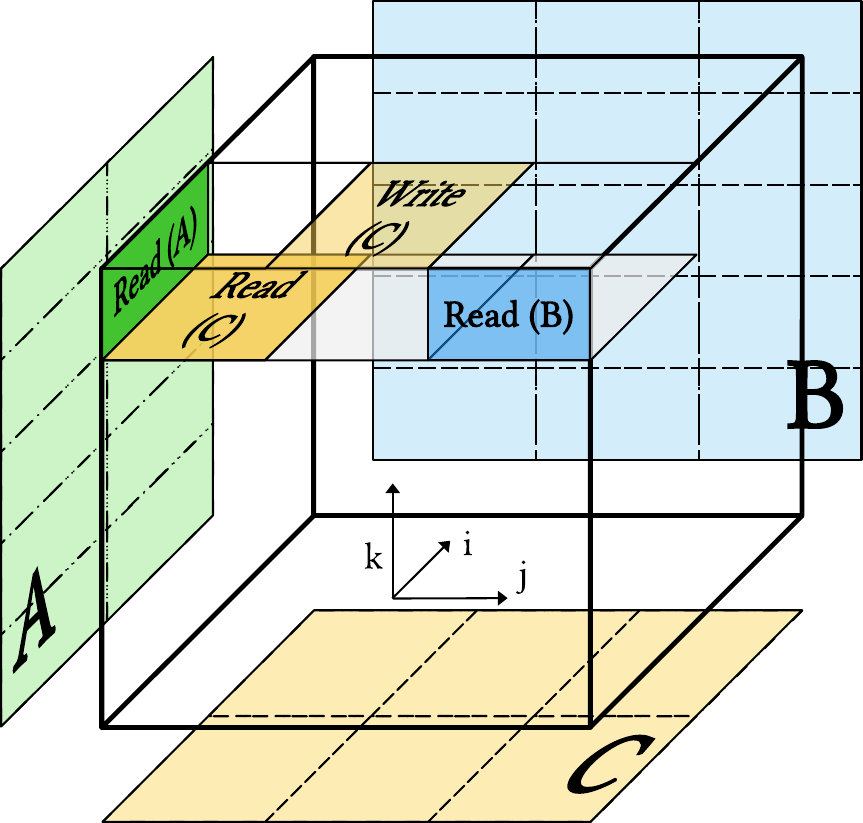}
\caption{Tiled matrix product}
% \label{fig:matrix-prod-tiled}
% \end{subfigure}
% \begin{subfigure}{0.5\columnwidth}
% \includegraphics[width=\columnwidth]{}
% \caption{Optimal tile shape for reuse}
% \label{fig:matrix-prod-optimal-tile-size}
% \end{subfigure}
% \caption{Tiled matrix product and optimal tile size}
\label{fig:matrix-prod}
\end{figure}

In this example, an entire patch of each input matrix $A$, $B$ is transferred
for the execution of each tile. 

Despite the added locality, the application can still be memory-bound: tiled
matrix product lacks data access contiguity. Barring any data layout
manipulations, data is contiguous within a row (for row-major storage) or column
(for column-major storage). A patch of $A$, $B$ or $C$ is never contiguous
because it contains \emph{multiple parts} of contiguous rows (or columns). The
lack of contiguity therefore induces multiple short burst accesses to retrieve
the entire patch.

Like for intermediate results, it is desirable to increase spatial locality and
leverage contiguity to obtain higher performance on the input variables. Data
blocking has been known to increase the performance of matrix multiplication,
especially because data block correspond exactly to the ``footprint'' of
iteration tiles on the matrix.

\subsection{Limitations of existing transformations}

Although data tiling is sufficient for matrix multiplication, more complex 
computational patterns require a finer data partitioning. 

Ferry et al. \cite{Ferry_2023} proposes a breakup of intermediate results of
programs with purely uniform dependence patterns, that enables contiguity.
However, such dependence patterns exclude commonly found affine dependences,
such as the \emph{broadcast}-type dependences of matrix multiplication, despite
there existing a natural breakup like data blocking. 

Moreover, automatic data blocking is mostly applied by domain-specific compilers
that have to generate the memory allocation (e.g. Halide
\cite{Ragan_Kelley_2013}, AlphaZ \cite{Yuki_2012}). When there exists a memory
allocation and data layout in the input code, compilers follow it unless
specific directives (e.g. the \texttt{ARRAY\_PARTITION} directive in FPGA
high-level synthesis tools based on \cite{Cong_2011_Partitioning}) are given to
them. Allowing the compiler to change this allocation would open the door to
better bandwidth utilization. Works on inter-node communication
\cite{Dathathri_2013,Zhao_2021} where memory allocation only exists within the
nodes (and not across nodes) can resort to very specific groupings of data to
minimize the communication overhead; it makes sense to apply this idea likewise
to host-to-accelerator communications, despite there existing a global memory
allocation.

In this work, we generalize the principle of data blocking to automatically
partition the data arrays in function of when (in time) they are consumed.
This approach can only be guaranteed to work with a specific class of programs
called \emph{polyhedral programs} where the exact data flow is known using
static analysis. 

In the same approach, we propose a secondary partitioning of the intermediate
results; notably, this generalization coincides with an extension of the scope
of previous work \cite{Ferry_2023} to affine dependences.

\section{Background}
\label{sec:background}

We propose an automated approach to partitioning the data flow of a program.
To construct it, we rely the polyhedral analysis and transformation framework
and elements of linear algebra that this section introduces.

\subsection{Polyhedral representation, tiling}
To be eligible for affine MARS partitioning, a program (or a section thereof)
must have a polyhedral representation. It may come either from the analysis of
an imperative program (e.g. using PET \cite{Verdoolaege_2012} or Clan
\cite{Bastoul_2003_Clan}) or a domain-specific language. In any case, the
following elements are assumed to be available:

\begin{itemize}
  \item A $d$-dimensional iteration domain $\mathcal{D} \subset \mathbf{Z}^d$, 
  or a collection of such domains,
  \item A $k$-dimensional data domain $\mathcal{A}$,
  \item A collection $(\varphi_i)_i$ of access functions $\varphi_i : 
  \mathcal{D} \to \mathcal{A}$, defining the  reads and writes at each instance,
  \item A \emph{polyhedral reduced dependence graph} (PRDG), constructed e.g. 
  via array dataflow analysis \cite{Feautrier_1991}.
\end{itemize}

The core elements extracted from the polyhedral representation are the
dependences, that model which data must be available for a computation (any
point in $\mathcal{D}$) to take place. The data flow notably comprises two kinds
of dependences we focus about in this paper:
\begin{itemize}
  \item Flow dependences: correspond to passing of intermediate results within 
  the polyhedral section of the program,
  \item Input dependences: correspond to input data going into the program.
\end{itemize}

Both can be mapped to affine functions corresponding to the following
definition:
\begin{definition}
A \emph{dependence function} is any affine function from an iteration domain
$\mathcal{D}$ to another domain $\mathcal{D}'$ (iteration or data). In
particular, a dependence function is a single-valued relation (each element
of $\mathcal{D}$ has a single image).
\end{definition}
Each dependence will be noted $B$, and as an affine function, it is computed as 
$B(\vec{x}) = A\vec{x} + \vec{b}$ with $A$ a matrix and $\vec{b}$ a vector.

Each \emph{domain} is a subset of a \emph{Euclidean vector space} $E \subset
\mathbf{Z}^d$. In particular, every point $x \in \mathcal{D}$ is associated 
to a vector $\vec{x} \in E$. Section~\ref{sec:lin-alg} gives further elements
of linear algebra used throughout this paper.

\begin{definition}
\label{def:uniform-uniformly-intersecting}
A dependence $B(\vec{x})=A\vec{x}+\vec{b}$ is said to be \emph{uniform} when 
$A$ is the square identity matrix. A collection of dependences $B_1, \dots, B_n$
are \emph{uniformly intersecting} if they all have the same linear part, i.e.
the same $A$ matrix.  
\end{definition}

To create a partitioning of the data spaces, our work relies on an existing
partitioning of the iteration space. Loop tiling \cite{Irigoin_1988,
Schreiber_1990, Wolf_1991, Ramanujam_1992}, a locality optimization, creates
such a partitioning. It uses \emph{tiling hyperplanes} to do so. Each hyperplane
is defined by a \emph{normal vector} (of unit norm). Tiles are periodically
repeated, with a period $s$ called the \emph{tile size}. We notably use
\emph{scaled normal vectors} that translate a point from a tile to the same
point in another tile by crossing one tiling hyperplane.

In this work, we assume \textbf{tiling hyperplanes are linearly independent from
each other}. Each tile has (unique) coordinates that are represented by a
$t$-dimensional vector $\vec{t} = (i_1, \dots, i_t)$ where $t$ is the number of
tiling hyperplanes.
This tile is the set defined by $$ T(\vec{t}) = \left\lbrace \vec{x} \in E :
\bigwedge_{j \in \lbrace 1, \dots, t \rbrace} s_ji_j \leqslant \frac{1}{s_j}
(\vec{x} \cdot \vec{n}_j) < s_j(1 + i_j) \right\rbrace$$

The \emph{footprint} of a dependence $B$ and a tile $T(\vec{t})$ is the image of
the tile by the dependence: $$ B\left<T(\vec{t})\right> = \left\lbrace
B(\vec{x}) : x \in T(\vec{t}) \right\rbrace$$

\subsection{Linear algebra}
\label{sec:lin-alg}

In this paper, we use several fundamental results from linear algebra. Below 
are reminders of them for the reader's reference.

\subsubsection{Spaces and bases}

\begin{definition}
\label{def:orthonormal base}
Let $E$ be an Euclidean vector space of $d$ dimensions with its scalar product
noted $\vec{x}\cdot\vec{y}$. Let $\mathcal{B} = \left(\vec{e}_1, \dots,
\vec{e}_d\right)$ be a basis of $E$. $\mathcal{B}$ is called an
\emph{orthonormal basis} of $E$ when for all $i \neq j$, $\vec{e}_i \cdot
\vec{e}_j = 0$ and for all $i$, $\vec{e}_i \cdot \vec{e}_i = 1$.
\end{definition}

\begin{proposition}
\label{prop:existence-orthonormal-basis}
Any Euclidean space $E$ of $d$ dimensions admits an orthonormal basis.
\end{proposition}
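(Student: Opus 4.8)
The plan is to prove that every finite-dimensional Euclidean space admits an orthonormal basis by the classical Gram–Schmidt construction, carried out by induction on the dimension $d$. The base case $d = 1$ is immediate: pick any nonzero vector $\vec{v}$, which exists since $E$ has dimension one, and set $\vec{e}_1 = \vec{v} / \sqrt{\vec{v} \cdot \vec{v}}$; this is well-defined because the scalar product is positive definite, so $\vec{v} \cdot \vec{v} > 0$, and $\vec{e}_1 \cdot \vec{e}_1 = 1$ by construction, giving a (trivially orthogonal) orthonormal basis.

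For the inductive step, I would start from an arbitrary basis $(\vec{v}_1, \dots, \vec{v}_d)$ of $E$, which exists because $E$ is a $d$-dimensional vector space. The construction is recursive: set $\vec{u}_1 = \vec{v}_1$, and for each $k$ from $2$ to $d$, define
$$ \vec{u}_k = \vec{v}_k - \sum_{j=1}^{k-1} \frac{\vec{v}_k \cdot \vec{u}_j}{\vec{u}_j \cdot \vec{u}_j}\, \vec{u}_j . $$
Then normalize: $\vec{e}_k = \vec{u}_k / \sqrt{\vec{u}_k \cdot \vec{u}_k}$. The three facts to check, each by induction on $k$, are: (i) each $\vec{u}_k$ is nonzero — this follows because $\vec{u}_k$ differs from $\vec{v}_k$ only by a linear combination of $\vec{u}_1, \dots, \vec{u}_{k-1}$, which by the inductive hypothesis span the same subspace as $\vec{v}_1, \dots, \vec{v}_{k-1}$, so $\vec{u}_k = 0$ would make $\vec{v}_k$ linearly dependent on $\vec{v}_1, \dots, \vec{v}_{k-1}$, contradicting that the $\vec{v}$'s form a basis; (ii) $\vec{u}_k \cdot \vec{u}_i = 0$ for all $i < k$ — a direct computation using the inductive orthogonality of $\vec{u}_1, \dots, \vec{u}_{k-1}$, where all cross terms vanish and the surviving term cancels by the choice of coefficient; and (iii) the family $(\vec{e}_1, \dots, \vec{e}_d)$ spans $E$, hence is a basis, since each $\vec{u}_k$ (and thus $\vec{e}_k$) lies in the span of $\vec{v}_1, \dots, \vec{v}_k$ and conversely, so the spans agree at every stage and in particular at $k = d$.

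The only genuinely delicate point — and the one I would write out most carefully — is the non-vanishing claim (i), because it is the sole place where positive-definiteness of the scalar product and the linear independence of the starting basis are both used, and it is what guarantees the divisions in the formula are legitimate; everything else is a mechanical verification of orthogonality and a dimension count. Once (i), (ii), (iii) are established, the $\vec{e}_k$ satisfy $\vec{e}_i \cdot \vec{e}_j = 0$ for $i \neq j$ and $\vec{e}_i \cdot \vec{e}_i = 1$, which is exactly Definition~\ref{def:orthonormal base}, completing the proof.
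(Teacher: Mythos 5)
Your proof is correct and takes exactly the route the paper indicates: applying Gram--Schmidt orthonormalization to an arbitrary basis of $E$ (the paper merely cites this construction without writing it out, whereas you carry out the induction and verify the non-vanishing, orthogonality, and spanning claims in detail).
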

The proof of Proposition~\ref{prop:existence-orthonormal-basis} is done by 
applying the Gram-Schmidt basis orthonormalization to an existing basis.

\begin{definition}
\label{def:vect-supplementary}
The vector space of all linear combinations of a number of vectors $(\vec{e}_1,
\dots, \vec{e}_n)$ is noted $\mathsf{vect}(\vec{e}_1, \dots, \vec{e}_n)$. 
Notably, that space has up to $n$ dimensions, and exactly $n$ dimensions if 
all the $n$ vectors are linearly independent.
\end{definition}

\begin{definition}
\label{def:supplementary}
Two subspaces $S_1$ and $S_2$ of a vector space $E$ are \emph{supplementary}
into $E$ when their intersection is the null vector $\vec{0}$, and there exists
a decomposition of all $\vec{x} \in E$ as $\vec{x}_1 + \vec{x}_2$ with
$\vec{x}_1 \in S_1$ and $\vec{x}_2 \in S_2$. That decomposition is notably
unique.
\end{definition}

\subsubsection{Linear applications}

\begin{definition}
\label{def:linear-application-null-space}
% A \emph{linear application} $A$ between a space $E$ and a space $F$ (noted
% $A:E\to F$) is a relation between elements of $E$ and elements of $F$ mapping
% linear combinations of basis vectors of $E$ to linear combinations of basis
% vectors of $F$.
% When $E$ and $F$ are of finite dimensions, $A$ may be represented by a matrix
% and the matrix-vector product $\vec{y} = A\vec{x}$ is the image of $\vec{x}$ by
% $A$.

Let $A:E \to F$ be a linear application.
%In particular, t
The subspace $K$ of $E$ such that $\forall \vec{x} \in K,
A\vec{x} = \vec{0}$ is called the \emph{null space} of $A$ and is noted
$\mathsf{ker}(A)$.
\end{definition}

\begin{definition}
\label{def:image-preimage}
Let $A:E\to F$ be a linear application. The image of $E$ by $A$ is noted
$A\left<E\right>$. Likewise, the image of a subspace $S \subset E$ by $A$ is
noted $A\left<G\right>$. The preimage of a subspace $T \subset F$ is noted 
$A^{-1}\left<F\right>$.
\end{definition}

\begin{proposition}
\label{prop:supplementary-of-kernel}
If $A:E\to F$ is a linear application, $E$ has $d$ dimensions, and
$\mathsf{ker}(A)$ is its null space, then let $k \leqslant d$ be the
dimensionality of $\mathsf{ker}(A)$.
There exists a $d-k$-dimensional supplementary $I$ of $\mathsf{ker}(A)$ in $E$,
such that:
$$ \forall \vec{x} \in I, (A\vec{x} = \vec{0} \Rightarrow \vec{x} = \vec{0}) $$
\end{proposition}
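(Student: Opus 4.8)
The plan is to exhibit the supplementary subspace $I$ explicitly rather than argue by an abstract dimension count. Since $E$ is Euclidean, I would take $I = \mathsf{ker}(A)^{\perp}$, the set of vectors orthogonal to every vector of $\mathsf{ker}(A)$; an equivalent route is to pick a basis of $\mathsf{ker}(A)$, extend it to a basis of $E$, and let $I$ be the span of the added vectors. I favor the orthogonal-complement version because it plugs directly into the linear-algebra toolkit of Section~\ref{sec:lin-alg}, in particular Proposition~\ref{prop:existence-orthonormal-basis}.

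First I would produce a well-chosen basis. Apply Proposition~\ref{prop:existence-orthonormal-basis} inside the subspace $\mathsf{ker}(A)$ to obtain an orthonormal family $(\vec{e}_1,\dots,\vec{e}_k)$ spanning it, then complete it to a basis of $E$ and run Gram–Schmidt to get an orthonormal basis $(\vec{e}_1,\dots,\vec{e}_k,\vec{e}_{k+1},\dots,\vec{e}_d)$ of $E$ whose first $k$ vectors still span $\mathsf{ker}(A)$. Set $I = \mathsf{vect}(\vec{e}_{k+1},\dots,\vec{e}_d)$, which by Definition~\ref{def:vect-supplementary} has exactly $d-k$ dimensions since those vectors are linearly independent.

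Next I would verify that $I$ and $\mathsf{ker}(A)$ are supplementary in $E$ in the sense of Definition~\ref{def:supplementary}. For the trivial intersection: any $\vec{x}\in I\cap\mathsf{ker}(A)$, written in the two halves of the basis, is orthogonal to itself, so $\vec{x}\cdot\vec{x}=0$ and hence $\vec{x}=\vec{0}$. For the decomposition: expand an arbitrary $\vec{x}\in E$ in the orthonormal basis and split the sum at index $k$, giving $\vec{x}=\vec{x}_1+\vec{x}_2$ with $\vec{x}_1\in\mathsf{ker}(A)$ and $\vec{x}_2\in I$, uniqueness following from the trivial intersection. The displayed property is then immediate: if $\vec{x}\in I$ and $A\vec{x}=\vec{0}$, then $\vec{x}\in\mathsf{ker}(A)$ by Definition~\ref{def:linear-application-null-space}, so $\vec{x}\in I\cap\mathsf{ker}(A)=\lbrace\vec{0}\rbrace$, i.e. $\vec{x}=\vec{0}$.

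I do not expect a genuine obstacle here; the only point needing care is the basis-completion/Gram–Schmidt step, i.e. ensuring an orthonormal basis of $\mathsf{ker}(A)$ can be extended to an orthonormal basis of $E$ — which is exactly what Proposition~\ref{prop:existence-orthonormal-basis} delivers once an arbitrary completion to a basis of $E$ is taken. One should also note that the statement is about the ambient vector space $E$, so treating $E$ as a real Euclidean space is legitimate even though domains are presented as subsets of $\mathbf{Z}^d$; phrasing the argument at the lattice level would introduce unimodularity subtleties that are irrelevant to the claim as stated.
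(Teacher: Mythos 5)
Your proof is correct. The paper itself gives no proof of Proposition~\ref{prop:supplementary-of-kernel} (it is stated as a standard fact, unlike Proposition~\ref{prop:existence-orthonormal-basis} for which a one-line Gram--Schmidt justification is offered), so there is nothing to diverge from; your explicit construction of $I$ as the orthogonal complement of $\mathsf{ker}(A)$, obtained by extending an orthonormal basis of the kernel to one of $E$, is exactly the decomposition the paper implicitly relies on later when it takes $(\vec{e}_1,\dots,\vec{e}_r)$ spanning $I$ and $(\vec{e}_{r+1},\dots,\vec{e}_d)$ spanning $\mathsf{ker}(A)$ inside a single orthonormal basis and then orthogonally projects the normal vectors onto $I$.
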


\section{Partitioning Data and Iteration Spaces}
\label{sec:construction-of-mars}

This section constitutes the core of our work: it proposes a breakup of the
iteration and data spaces based on the same properties as the existing uniform
breakup, detailed in Section~\ref{sec:uniform-deps}. 

The reasoning leading to the MARS starts from a simple, restrictive case (one
single dependence, Section~\ref{sec:one-dep}) and progressively relaxes its
hypotheses (multiple \emph{uniformly intersecting} dependences,
Section~\ref{sec:unif-int-dep} and non-\emph{uniformly intersecting}
dependences, Section~\ref{sec:non-unif-dep}). The last step of the reasoning in
Section~\ref{sec:tiled-spaces} adds the constraint of partitioning an existing
tiled space, which allows to partition intermediate results.

\subsection{Case of uniform dependences}
\label{sec:uniform-deps}

Maximal Atomic irRedundant Sets (MARS) are introduced in \cite{Ferry_2023}.
They are defined as a partition of the \emph{flow-out} iterations of a tile,
such that every element of the partition is the largest set of iterations that
verifies:
\begin{itemize}
  \item Atomicity: consumption of a single element from a MARS implies 
  consumption of the entire MARS.
  \item Maximality: considering all the consumers of a MARS $M_0$ ($C_0$) and 
  all the consumers of another MARS $M_1$ ($C_1$), if $C_0 = C_1$, then 
  $M_0 = M_1$. 
  \item Irredundancy: each element of the MARS space belongs to no more than 
  a single MARS. 
\end{itemize}

While \cite{Ferry_2023} uses the flow-in and flow-out information in the sense
of \cite{Bondhugula_2013}, input data and output data do not belong to this
information. This work instead resorts on the notion of \emph{footprint} from
\cite{Agarwal_1995}; notably, the notion of \emph{flow-in iterations} of a tile
coincides with that of a footprint of a tile (of iterations) on another tile of
iterations.

The properties of MARS constructed with uniform dependences are the same as
those sought in this paper. Merely proposing a partition of the iteration or
data spaces satisfies the irredundancy property; the properties to actually
check from the partitioning are the atomicity and maximality.

\subsection{The problem: uniform versus affine dependences}
\label{sec:comparison-unif-affine}
In the uniform case, MARS can be constructed by enumerating all the
\emph{consumer tiles} of a given tile, i.e. those other tiles that need data
from that tile. The uniformity guarantees that there are a finite number
of consumer tiles, and that all tiles will exhibit the same MARS regardless of
their position in the iteration space (i.e. MARS are invariant by translation
of a tile).

Affine dependences do not guarantee a finite number of consumer tiles; it may 
be parametric or potentially unbounded. Also, it becomes necessary to assert
when the invariance by translation is possible.

In the rest of this section, we will prove, for one and multiple dependences:
\begin{itemize}
  \item The \textbf{existence of a finite set of representatives of all consumer
  tiles}, suitable to determine the MARS partition,
  \item The \textbf{invariance} of the partitioning \textbf{by a translation of 
  a tile}, or conditions to guarantee it.
\end{itemize}

\subsection{Case of a single affine dependence}
\label{sec:one-dep}

The simplest case is when there is a single affine dependence between a tiled
iteration space and a data space. This subsection starts with an example and
explains the general case afterwards.

\subsubsection{Example}
\label{sec:one-dep-example}

To start with, we introduce an example with a single dependence, and
non-canonical tiling hyperplanes. 

\begin{itemize}
  \item Domain: $\left\lbrace (i, j) : 0 \leqslant i < N, 0 \leqslant j < M 
  \right\rbrace$, basis vectors $\vec{e}_i, \vec{e}_j$
  \item Dependence : $S_0(i, j) \mapsto \mathcal{A}(i)$, represented as 
  $B(i, j) = (i)$  (i.e. $B(\vec{x}) = A\vec{x} + \vec{b}$ with 
  $A : (i, j) \mapsto (i)$ and $\vec{b} = \vec{0}$).
  \item Tiling hyperplanes : $H_0 : i + j$, $H_1 : j - i$
  \item Normal vectors: $\vec{n}_1 = (1, 1)$, $\vec{n}_2 = (-1, 1)$; scaled
  normal vectors (w.r.t. tile size): $\vec{\mathbf{n}}_1 = (s/2, s/2)$, 
  $\vec{\mathbf{n}}_2 = (-s/2, s/2)$
  \item Tile size : $s \in \mathbf{N}^{*}$
\end{itemize}

% The \emph{scaled normal vectors} are translation vectors between tiles: 
% translating by one such vector is adding $\pm 1$ to the corresponding 
% tile coordinate.

We want to construct the MARS on the $\mathcal{A}$ data space. To do so, we are
going to compute the \emph{footprint} \cite{Agarwal_1995} of a tile onto the
data space along the dependence $B$; then, by noticing that all footprints are a
translation of the same footprint, we will determine parametrically which tiles
have intersecting footprints, and compute the MARS using the same method as
\cite{Ferry_2023}.

We first define a tile of iterations with a parametric set : we call 
$T(i_0, i_1)$ the set : 
$$ T(i_0, i_1) = \left\lbrace (i, j) : si_0 \leqslant i + j < s(1 + i_0) 
\wedge si_1 \leqslant j - i < s(1 + i_1) \right\rbrace$$

The footprint of $T(i_0, i_1)$ by the dependence $B$, appearing in 
Figure~\ref{fig:mars-single-dep-footprint}, is therefore:
$$ B\left<T(i_0, i_1)\right> = \left\lbrace (i) : \exists j : si_0 
\leqslant i + j < s(1 + i_0) \wedge si_1 \leqslant j - i < s(1 + i_1) 
\right\rbrace$$
where the existential quantifier may be removed:
$$ B\left<T(i_0, i_1)\right> = \left\lbrace (i) : s(i_0 - i_1 - 1) < 2i < 
s(i_0 - i_1 + 1) \right\rbrace$$

\begin{figure}
\centering
\includegraphics[width=\columnwidth]{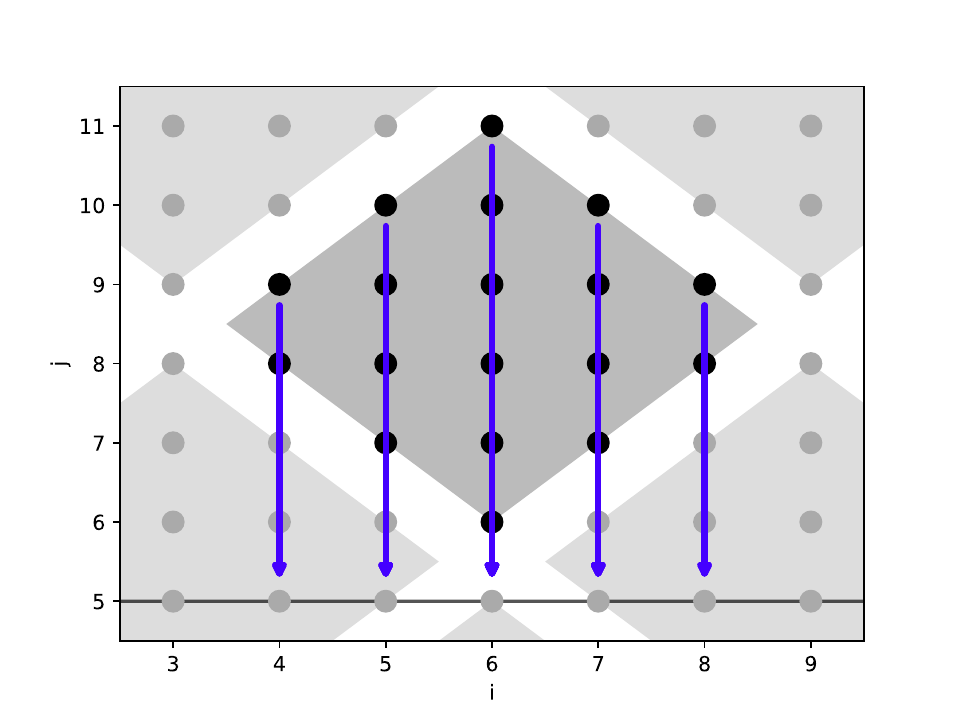}
\caption{Footprint of one tile with a single affine dependence $B(i, j) = (i)$.
The one-dimensional destination space is shown as a continuous line.}
\label{fig:mars-single-dep-footprint}
\end{figure}

Given $(i_0, i_1)$, we now seek the other tiles which footprint's intersection
with $B\left<T(i_0, i_1)\right>$ is not empty: let $(i_2, i_3)$ be another tile.
$$ \begin{aligned}
& B\left<T(i_0, i_1)\right> \cap B\left<T(i_2, i_3)\right> = \\
& \left\lbrace (i) : s(i_0 - i_1 - 1) + 1 \leqslant 2i \leqslant 
s(i_0 - i_1 + 1) - 1 \right. \\
& \left. \wedge s(i_2 - i_3 - 1) + 1 \leqslant 2i \leqslant 
s(i_2 - i_3 + 1) - 1 \right\rbrace
\end{aligned}
$$

The intervals 
$\left[\!\left[s(i_0 - i_1 - 1) + 1; s(i_0 - i_1 + 1) - 1\right]\!\right]$
and \\
$\left[\!\left[s(i_2 - i_3 - 1) + 1; s(i_2 - i_3 + 1) - 1\right]\!\right]$ 
intersect 
% if:
% $$\begin{cases}
% s(i_2 - i_3 - 1) + 1\leqslant s(i_0 - i_1 - 1) + 1\leqslant 
% s(i_2 - i_3 + 1) - 1 & \text{or} \\
% s(i_2 - i_3 - 1) + 1\leqslant s(i_0 - i_1 + 1) - 1\leqslant 
% s(i_2 - i_3 + 1) - 1 & \text{or} \\
% 
% s(i_0 - i_1 - 1) + 1\leqslant s(i_2 - i_3 - 1) + 1\leqslant 
% s(i_0 - i_1 + 1) - 1 & \text{or} \\
% s(i_0 - i_1 - 1) + 1\leqslant s(i_2 - i_3 + 1) - 1\leqslant s(i_0 - i_1 + 1) - 1
% 
% \end{cases}$$ i.e. 
if $i_0 - i_1 = i_2 - i_3 + 1$, $i_0 - i_1 = i_2 - i_3$ or 
$i_0 - i_1 = i_2 - i_3 - 1$.

\begin{figure}
\centering
\includegraphics[width=\columnwidth]{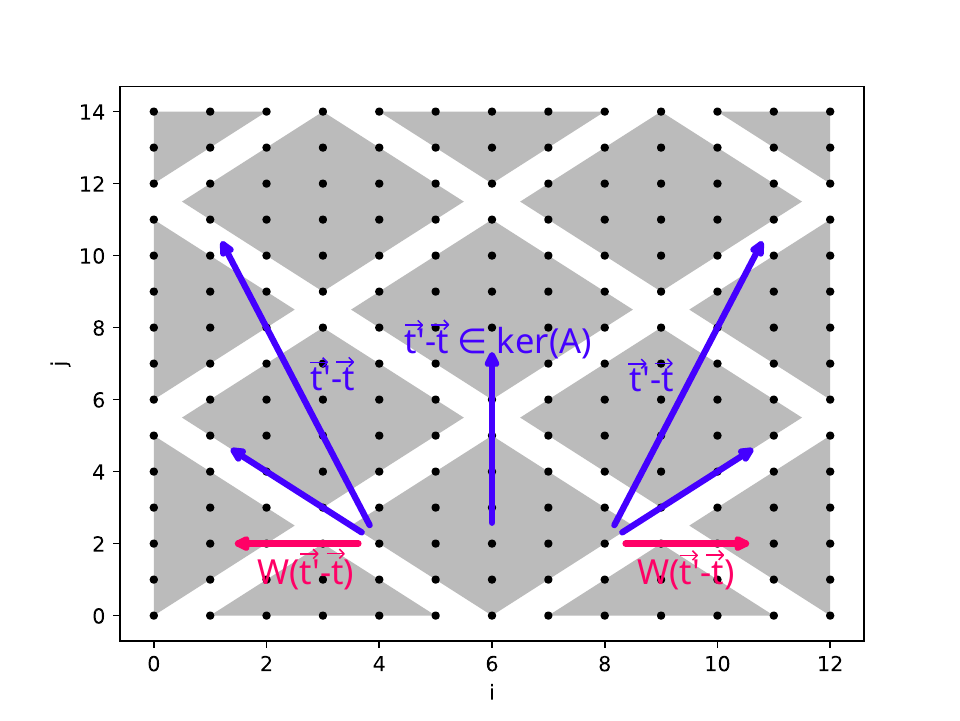}
\caption{Some consumer tiles of one tile $T(\vec{t})$ with a single dependence 
$B(i, j) = (i)$, and the projection $W(\vec{t'}-\vec{t})$ on a supplementary of 
$\mathsf{ker}(A)$. There are a finite number of such projected vectors, and 
they are constant.}
\label{fig:mars-single-dep-consumers}
\end{figure}

The valid $(i_2, i_3)$s are therefore: 
$$ \begin{aligned}
(i_2, i_3) & \in \left\lbrace (i_0 + p, i_1 + p); p \in \mathbf{Z} \right\rbrace \\
& \cup \left\lbrace (i_0 + p - 1, i_1 + p); p \in \mathbf{Z} \right\rbrace \\
& \cup \left\lbrace (i_0 + p + 1, i_1 + p); p \in \mathbf{Z} \right\rbrace
\end{aligned}
$$ as shown in blue in Figure~\ref{fig:mars-single-dep-consumers}.

The space of valid $(i_2, i_3)$ is infinite: we can visually see this as all
tiles along a vertical axis share the same footprint on $\mathcal{A}$.
We can formalize this intuition by computing the kernel of $A$ : in this case,
it is $$\mathsf{ker}(A) = \mathsf{vect}(\vec{e}_j)$$ and the image of a point
on $\mathcal{A}$ is invariant by any upwards or downwards translation.

There are however only three \emph{distinct} footprints intersecting with that
of $T(i_0, i_1)$; the other footprints stem from tiles which are translations
along $\mathsf{ker}(A)$. These footprints come from the top-left, top-right
tiles and all tiles above them vertically; these consumer tiles are shown in 
Figure~\ref{fig:mars-single-dep-consumers}.

We can decompose the space $(i, j)$ using a basis of the kernel and a
supplementary, for instance $E = \mathsf{vect}(\vec{e}_i) \oplus
\mathsf{ker}(A)$.

In this basis, we can express the coordinates of tile's origins for the case 
where $(i_2, i_3) = (i_0 + p + 1, i_1 + p)$ with $p \in \mathbf{Z}$ using the 
scaled normal vectors $\vec{\mathbf{n}}_1$, $\vec{\mathbf{n}}_2$:
$$
\begin{aligned}
i_2\vec{\mathbf{n}}_1 + i_3\vec{\mathbf{n}}_2 &= i_2\frac{s}{2}(\vec{e}_i + \vec{e}_j) + 
i_3\frac{s}{2}(\vec{e}_j - \vec{e}_i) \\
&= \frac{s}{2}(i_0 + p + 1)(\vec{e}_i + \vec{e}_j) + 
\frac{s}{2}(i_1 + p)(\vec{e}_j - \vec{e}_i) \\
&= \frac{s}{2}(i_0 - i_1 + 1)\vec{e}_i + 
\frac{s}{2}(i_0 + i_1 + 2p + 1)\vec{e}_j
\end{aligned}
$$
which, when projected onto $\mathsf{vect}(\vec{e}_i)$, gives:
$$ 
P_{\mathsf{vect}(\vec{e}_i)}(i_2\vec{\mathbf{n}}_1 + i_3\vec{\mathbf{n}}_2) = \frac{s}{2}(i_0 - 
i_1 + 1)\vec{e}_i 
$$
which is independent of $p$. This means that all points within tile 
$T(i_2, i_3)$ have the same image by $B$. Therefore, \textbf{given $(i_0, i_1)$,
the entire family of tiles $(i_0 + p + 1, i_1 + p)$ have the same footprint
on $\mathcal{A}$.} We can therefore consider a single representative of that
family to compute the MARS.

Likewise, if $(i_2, i_3) = (i_0 + p - 1, i_1 + p)$ with $p \in \mathbf{Z}$, then
$$ P_{\mathsf{vect}(\vec{e}_i)}(i_2\vec{\mathbf{n}}_1 + i_3\vec{\mathbf{n}}_2) = \frac{s}{2}(i_0 -
i_1 - 1)\vec{e}_i $$ which is also independent of $p$; the same conclusion holds
for $(i_2, i_3) = (i_0 + p, i_1 + p)$ and
$P_{\mathsf{vect}(\vec{e}_i)}(i_2\vec{\mathbf{n}}_1 + i_3\vec{\mathbf{n}}_2) = \vec{0}$.
Figure~\ref{fig:mars-single-dep-consumers} shows in pink the projection of the
\emph{translation vectors} from the tile $T(\vec{t})$ to its consumers (there
are only two non-null projections, so only two such vectors appear).

There are an infinity of tiles which footprint intersects with that of a given
tile; however, to compute the MARS, we have demonstrated that it is sufficient 
to take three representative tiles. The same procedure as in \cite{Ferry_2023}
can be applied once these three \emph{consumer tiles} have been determined.

Per \cite{Ferry_2023}, we compute the respective intersections with 
$B\left<T(i_0, i_1)\right>$ of all other consumer tiles: for 
$(i_2, i_3) = (i_0 + p + 1, i_1 + p)$ with $p \in
\mathbf{Z}$, $$ \begin{aligned} B\left<T(i_0, i_1)\right> \cap B\left<T(i_0 + p
+ 1, i_1 + p)\right> = \\
\left\lbrace (i) : s(i_0 - i_1 - 1) < 2i < s(i_0 - i_1 + 1) \right. \\
\left. \wedge s(i_0 - i_1) < 2i < s(i_0 - i_1 + 2) \right\rbrace \\
= \left\lbrace (i) : s(i_0 - i_1) < 2i < s(i_0 - i_1 + 1) \right\rbrace \\
\end{aligned}
$$ 

When $(i_2, i_3) = (i_0 + p - 1, i_1 + p)$ with $p \in \mathbf{Z}$, $$
\begin{aligned} B\left<T(i_0, i_1)\right> \cap B\left<T(i_0 + p - 1, i_1 +
p)\right> = \\
= \left\lbrace (i) : s(i_0 - i_1 - 1) < 2i < s(i_0 - i_1) \right\rbrace \\
\end{aligned}
$$ 

Finally, when $(i_2, i_3) = (i_0 + p, i_1 + p)$, $$
\begin{aligned} B\left<T(i_0, i_1)\right> \cap B\left<T(i_0 + p, i_1 +
p)\right> = \\
= \left\lbrace (i) : s(i_0 - i_1) = 2i \right\rbrace \\
\end{aligned}
$$

Also, $B\left<T(i_0 + p + 1, i_1 +p)\right> \cap B\left<T(i_0 + p - 1, i_1
+p)\right> = \varnothing$, so we have all the MARS.

The MARS on symbol $\mathcal{A}$ for this program seen from a tile $T(i_0,
i_1)$ are therefore the three sets $\left\lbrace (i) : s(i_0 - i_1) < 2i < s(i_0 -
i_1 + 1) \right\rbrace$, $\left\lbrace (i) : s(i_0 - i_1 - 1) < 2i < s(i_0 -
i_1) \right\rbrace$ and $\left\lbrace (i) : s(i_0 - i_1) = 2i \right\rbrace$.
These MARS are shown in Figure~\ref{fig:mars-single-dep-mars}.

\begin{figure}
\centering
\includegraphics[width=\columnwidth]{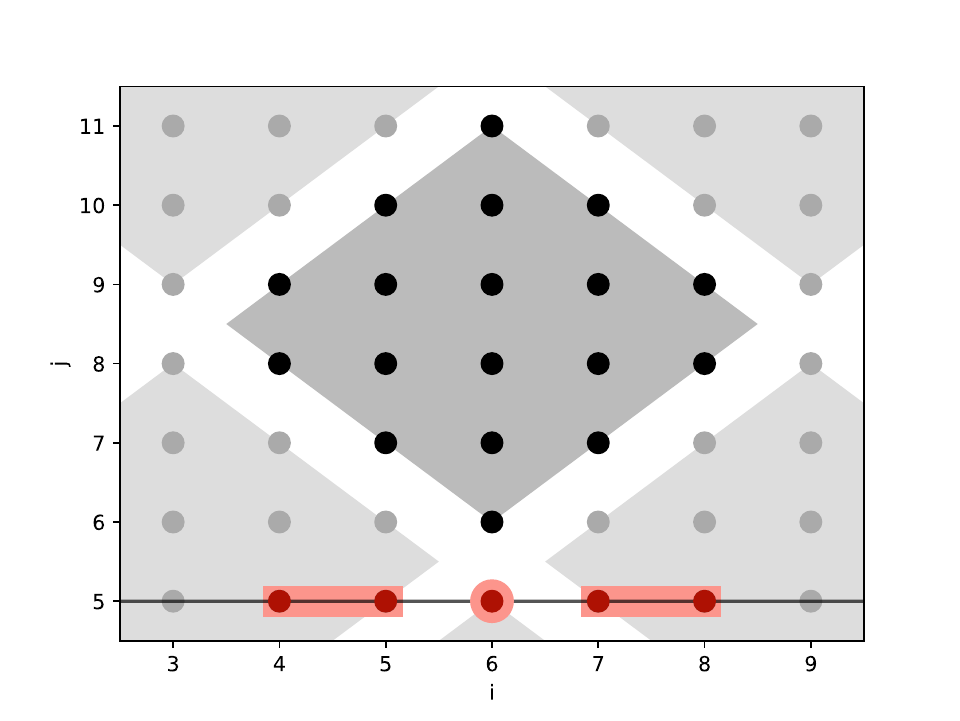}
\caption{MARS obtained with a single affine dependence $B(i, j) = (i)$.}
\label{fig:mars-single-dep-mars}
\end{figure}

\subsubsection{General case}

In the general case, computing the MARS for a single dependence leading to a
non-tiled space can be done as follows.
Let $\mathcal{D}$ be the $N$-dimensional iteration space from which the
dependence originates, and $E$ be the vector space such that $\mathcal{D}
\subset E$.; let $\mathcal{A}$ be the destination space. Let $B$ be the
dependence with $B(\vec{x}) = A\vec{x} + \vec{b}$. Let $(H_1, \dots, H_t)$ be
the $t$ tiling hyperplanes, $(\vec{n}_1, \dots, \vec{n}_t)$ normal vectors to
the tiling hyperplanes, $(s_1, \dots, s_t)$ be the tile sizes.

For a tile coordinate be $\vec{t} = (i_1, \dots, i_t)$, the tile is defined as
$$ T(\vec{t}) = \left\lbrace \vec{x} = (x_1, \dots, x_N) : \forall j \in \lbrace
1, \dots, t \rbrace : s_ji_j \leqslant \vec{x} \cdot \vec{n}_j < s_j(1 + i_j)
\right\rbrace $$

We can compute, with $\vec{t}$ and another tile $\vec{t'}$ as a parameter, when
the intersection of $B\left<T(\vec{t})\right>$ and $B\left<T(\vec{t'})\right>$
is non-empty using affine operations. Let $V(\vec{t})$ be:
$$V(\vec{t}) = \left\lbrace \vec{t'} : B\left<T(\vec{t})\right> \cap
B\left<T(\vec{t'})\right> \neq \varnothing \right\rbrace$$ which is obtainable
by taking the parameter space of $I(\vec{t}, \vec{t'})$. $V(\vec{t})$ represents
the tile coordinates of all tiles which footprint on $\mathcal{A}$ intersects
that of $T(\vec{t})$.

In \cite{Ferry_2023}, $V(\vec{t})$ is determined by browsing through neighboring
tiles. The main difficulty here is that \textbf{$V(\vec{t})$ is potentially
infinite.} We will demonstrate that there are only a finite number of
\emph{distinct footprints} overlapping with $B\left<T(\vec{t'})\right>$. To
determine them, we suggest to decompose $E$ into $\mathsf{ker}(A)$ and a
supplementary $I$ of $\mathsf{ker}(A)$, i.e. $$E = I \oplus \mathsf{ker}(A)$$
Proposition~\ref{prop:supplementary-of-kernel} gives us that such a
decomposition always exists, and per
Proposition~\ref{prop:existence-orthonormal-basis}, there is an orthonormal
basis of the resulting space.

If $r = \mathsf{rank}(A)$, let $(\vec{e}_1, \dots, \vec{e}_r)$ be a basis of $I$
and $(\vec{e}_{r+1}, \dots, \vec{e}_d)$ be a basis of $\mathsf{ker}(A)$ such 
that $(\vec{e}_1, \dots, \vec{e}_d)$ is an orthonormal basis of $E$.
Let $(\vec{n}^p_1, \dots, \vec{n}^p_t)$ be the orthogonal projections of the
$\vec{n}_i$s onto $(\vec{e}_1, \dots, \vec{e}_r)$; in particular, these have
zero $r+1$-th through $d$-th coordinates.

For any $\vec{t'} \in V(\vec{t})$, if $\vec{t'} - \vec{t} = (\delta_1, \dots,
\delta_t)$, we can compute $$ W(\vec{t'} - \vec{t}) = \sum_{i=1}^{t}
\delta_i\vec{n}^{p}_i $$ which represents the part of the translation between
tiles that results in translating the images.

The most important result needed to construct the MARS is the ability to
enumerate \emph{all} the footprints. The following proposition formalizes it:
\begin{framed}
\begin{proposition}
\label{prop:one-dep-cons-tile-finite-uniform}
The set $P(\vec{t}) = \left\lbrace W(\vec{t'} - \vec{t}) :\vec{t'} \in V(\vec{t})
\right\rbrace$ is finite, and for each consumer tile $\vec{t'} \in V(\vec{t})$, 
there exists a unique $\vec{p} \in P(\vec{t})$ such that 
$$B\left<T(\vec{t'})\right> = \left\lbrace \vec{y} + A\vec{p} : \vec{y} \in 
B\left<T(\vec{t})\right> \right\rbrace$$
and that $\vec{p}$ is a constant vector, independent of $\vec{t}$ 
(i.e. the consumer tiles are invariant by translation).
\end{proposition}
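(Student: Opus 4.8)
The plan is to split the claim into its three assertions---finiteness of $P(\vec{t})$, the translation formula for footprints, and independence of $\vec{p}$ from $\vec{t}$---and to prove each by exploiting the decomposition $E = I \oplus \mathsf{ker}(A)$. The key observation, already visible in the worked example, is that the image of a tile under $A$ depends only on the projection of the tile onto $I$ along $\mathsf{ker}(A)$: since $A$ annihilates $\mathsf{ker}(A)$, for any $\vec{x} \in E$ written $\vec{x} = \vec{x}_I + \vec{x}_K$ we have $A\vec{x} = A\vec{x}_I$. I would first record this as a lemma and then observe that translating a tile $T(\vec{t})$ to $T(\vec{t}')$ moves every point by the vector $\vec{v} = \sum_i (\delta_i) \vec{\mathbf{n}}_i$ (the sum of scaled normal vectors weighted by $\vec{t}'-\vec{t}$), so that $B\langle T(\vec{t}')\rangle = \{ B(\vec{x}) + A\vec{v} : \vec{x} \in T(\vec{t})\} = \{\vec{y} + A\vec{v}_I : \vec{y} \in B\langle T(\vec{t})\rangle\}$, where $\vec{v}_I = P_I(\vec{v})$. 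Since the normal vectors are linearly independent and their projections onto $I$ are the $\vec{n}^p_i$, we get $A\vec{v}_I = A W(\vec{t}'-\vec{t})$ up to the tile-size scaling; this establishes the translation formula with $\vec{p} = W(\vec{t}'-\vec{t})$ (modulo bookkeeping of the $s_i$ factors between $\vec{n}_i$ and $\vec{\mathbf{n}}_i$, which I would fold into the definition of $W$).

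For finiteness, I would argue as follows. Two footprints $B\langle T(\vec{t})\rangle$ and $B\langle T(\vec{t}')\rangle$ intersect iff the corresponding translated copies of the fixed ``base'' footprint $B\langle T(\vec{0})\rangle$ overlap; by the translation formula this happens iff $A W(\vec{t}'-\vec{t})$ lies in the difference set $B\langle T(\vec{0})\rangle - B\langle T(\vec{0})\rangle$, which is a bounded region of $\mathcal{A}$ (the footprint of a single tile is bounded because a tile is bounded in the directions transverse to $\mathsf{ker}(A)$, these being exactly the directions $A$ sees). Now $W$ takes values in the lattice generated by $\vec{n}^p_1,\dots,\vec{n}^p_t$ inside $I$, and restricted to $I$ the map $A$ is injective (Proposition~\ref{prop:supplementary-of-kernel}); hence $A$ maps this lattice isomorphically onto a lattice in $A\langle I\rangle = A\langle E\rangle$, and only finitely many lattice points land in a bounded region. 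Therefore $\{A\vec{p} : \vec{p} \in P(\vec{t})\}$ is finite, and by injectivity of $A$ on $I$ so is $P(\vec{t})$ itself. Uniqueness of $\vec{p}$ for a given $\vec{t}'$ is immediate since $\vec{p} = W(\vec{t}'-\vec{t})$ is a function of $\vec{t}'-\vec{t}$, and if two candidates $\vec{p}_1,\vec{p}_2$ gave the same translate then $A(\vec{p}_1-\vec{p}_2)=\vec{0}$ with $\vec{p}_1-\vec{p}_2 \in I$, forcing equality.

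Finally, independence of $\vec{p}$ from $\vec{t}$ follows because $W$ depends only on the \emph{difference} $\vec{t}'-\vec{t}$, and the set $V(\vec{t})$ of intersecting tiles is itself translation-covariant: $\vec{t}' \in V(\vec{t})$ iff $A W(\vec{t}'-\vec{t})$ lies in the (tile-position-independent) bounded difference set above, a condition depending only on $\vec{t}'-\vec{t}$. Hence $P(\vec{t}) = \{W(\vec{\delta}) : A W(\vec{\delta}) \in B\langle T(\vec{0})\rangle - B\langle T(\vec{0})\rangle\}$ is the same set for every $\vec{t}$, and the matching $\vec{p}$ for a consumer at offset $\vec{\delta}$ is always $W(\vec{\delta})$. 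The main obstacle I anticipate is purely bookkeeping rather than conceptual: carefully reconciling the scaled normal vectors $\vec{\mathbf{n}}_i$ used to translate between tiles with the unscaled $\vec{n}_i$ used to define tiles and with their projections $\vec{n}^p_i$ onto the orthonormal basis of $I$, and confirming that ``footprint of a tile is bounded'' genuinely holds --- it does, because the linearly-independent tiling hyperplanes constrain $\vec{x}\cdot\vec{n}_j$ to a bounded interval for each $j$, and $A$ factors through the coordinates transverse to $\mathsf{ker}(A)$, which are controlled by these products. I would also need to handle the case $\mathsf{rank}(A) < t$ carefully, where several distinct tile offsets $\vec{\delta}$ collapse to the same $W(\vec{\delta})$; this is exactly the ``infinitely many tiles, finitely many footprints'' phenomenon and is consistent with (indeed the point of) the statement.
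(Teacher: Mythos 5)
Your proposal is correct and follows essentially the same route as the paper's proof: the decomposition $E = I \oplus \mathsf{ker}(A)$, the footprint-translation formula obtained by absorbing the tile offset into the argument of $A$ and discarding the kernel component, uniqueness via injectivity of $A$ on $I$, finiteness via boundedness of a single footprint, and constantness because everything depends only on $\vec{t'}-\vec{t}$. Your treatment is in fact slightly more careful than the paper's on two points it leaves implicit --- the discreteness of the lattice $\{A W(\vec{\delta})\}$ needed to conclude finiteness from boundedness, and the scaling bookkeeping between $\vec{n}_i$ and $\vec{\mathbf{n}}_i$ --- but these are refinements of the same argument, not a different one.
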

\end{framed}
\begin{proof}
\textbf{Completeness of footprints:} Let $\vec{t'} \in V(\vec{t})$, i.e. a tile 
which footprint intersects that of tile $\vec{t}$. We know that 
$W(\vec{t'}-\vec{t}) = \sum_{i=1}^{t}(t'_i - t_i)\vec{n}^p_i = \vec{p} \in 
P(\vec{t})$. Then: 
$$\begin{aligned}
B\left<T(\vec{t'})\right> & = \left\lbrace A\vec{x}+\vec{b} : \forall i : 
s_it'_i \leqslant \vec{x} \cdot \vec{n}_i \leqslant 
(1 + t'_i) s_i\right\rbrace \\
& = \left\lbrace A\vec{x}+\vec{b} : \forall i : s_i(t_i + (t'_i - t_i)) 
\leqslant \vec{x} \cdot \vec{n}_i \leqslant 
(1 + t_i + (t'_i - t_i)) s_i\right\rbrace \\
%& = \left\lbrace A\vec{x}+\vec{b} : \forall i : s_it_i \leqslant 
% (\vec{x} - (t'_i - t_i)\vec{n}_i) \cdot \vec{n}_i \leqslant 
% (1 + t_i) s_i\right\rbrace \\
& = \left\lbrace A\left(\vec{x} + 
{\textstyle \sum_{i=1}^{t}}(t'_i - t_i)\vec{n}_i\right)+\vec{b} : 
\forall i : s_it_i \leqslant \vec{x} \cdot \vec{n}_i \leqslant 
(1 + t_i) s_i\right\rbrace \\
& = \left\lbrace A\left(\vec{x} + 
{\textstyle \sum_{i=1}^{t}}(t'_i - t_i){\color{red}\vec{n}^p_i}\right)+\vec{b} :
 \vec{x} \in T(\vec{t})\right\rbrace \\
%& = \left\lbrace A\left(\vec{x} + W(\vec{t'}-\vec{t})\right)+\vec{b} : 
% \forall i : s_it_i \leqslant \vec{x} \cdot \vec{n}_i \leqslant 
% (1 + t_i) s_i\right\rbrace \\
%& = \left\lbrace A\left(\vec{x} + W(\vec{t'}-\vec{t})\right)+\vec{b} : 
% \vec{x} \in T(\vec{t})\right\rbrace \\
%& = \left\lbrace B\left(\vec{x}\right) + A\vec{p} : 
% \vec{x} \in T(\vec{t})\right\rbrace \\
& = \left\lbrace \vec{y} + A\vec{p} : \vec{y} \in B\left<T(\vec{t})\right> 
\right\rbrace \\
\end{aligned}
$$
using the fact that ${\color{red} A(\vec{n}_i) = A(\vec{n}_i^p)}$.

\textbf{Uniqueness of $\vec{p}$:} $A$ is bijective between $I$ (supplementary 
of $\mathsf{ker}(A)$ in $E$) and $\textsf{Im}(A)$. Therefore, because
$\vec{p} \in P(\vec{t})$ is in $I$, it is the unique element of $I$ which 
$A\vec{p}$ is the image. Therefore, $\vec{p}$ is unique in the sense of this
proposition.

\textbf{Finiteness of $P(\vec{t})$:} For all $\vec{t'} \in V(\vec{t})$,
$B\left<T(\vec{t'})\right>$ is a translation of $B\left<T(\vec{t})\right>$ by
$A\vec{p}$ with some $\vec{p} \in P(\vec{t})$.
The coordinates of $\vec{t}$ are integers, therefore $A\vec{p}$ is an integer
linear combination of the $A\vec{n}_i^p$s for $i \in \lbrace 1, \dots,
t\rbrace$.
$T(\vec{t})$ being bounded, the footprint $B\left<T(\vec{t})\right>$ is bounded,
therefore only a finite number of translations of itself by $A\vec{p}$s
intersect with it.

\textbf{Constantness of $\vec{p}$:} Let $\vec{t}_0, \vec{t}_1 \in
\mathbf{Z}^{t}$ and $\vec{t'}_0 \in V(\vec{t}_0)$, and
$\vec{p}=W(\vec{t'}-\vec{t})$. Let $\vec{t'}_1 = \vec{t}_1 + (\vec{t'}_0 -
\vec{t}_0)$. Then:
$$\begin{aligned} B\left<T(\vec{t'}_1)\right> &= \left\lbrace A\left(\vec{x} +
{\textstyle \sum_{i=1}^{t}}(t'_{1i} - t_{1i})\vec{n}_i\right)+\vec{b} : \forall
i : s_it_{1i} \leqslant \vec{x} \cdot \vec{n}_i \leqslant (1 + t_{1i})
s_i\right\rbrace \\
 &= \left\lbrace A\left(\vec{x} + {\textstyle \sum_{i=1}^{t}}(t_{1i} - t_{1i} +
 (t'_{0i} - t_{0i}) )\vec{n}_i\right)+\vec{b} : \vec{x} \in T(\vec{t}_1)
 \right\rbrace \\
 &= \left\lbrace A\vec{x} + {\textstyle \sum_{i=1}^{t}}(t'_{0i} -
 t_{0i})A\vec{n}_i + \vec{b} : \vec{x} \in T(\vec{t}_1)\right\rbrace \\
 &= \left\lbrace B(\vec{x}) + A\vec{p} : \vec{x} \in T(\vec{t}_1)\right\rbrace
 \\
\end{aligned}$$ which means that the translation between the images of
$T(\vec{t}_1)$ and $T(\vec{t'}_1)$ is the same as that of $T(\vec{t}_0)$ and
$T(\vec{t'}_0)$.

\end{proof}

We can therefore enumerate $P(\vec{t})$, knowing that for each $\vec{w} \in
P(\vec{t})$, $P^{-1}(\vec{w})$ represents \emph{consumer tiles} that all have
the same footprint by $B$. That footprint is computed as follows: $$
\Phi(\vec{w}) = B\left<T(\vec{t}) + \vec{w}\right> \text{ where } T(\vec{t}) +
\vec{w} = \left\lbrace \vec{x} + \vec{w} : \vec{x} \in T(\vec{t})
\right\rbrace$$

We can then compute the MARS. For all the combinations of $\vec{w}$s, 
i.e. for all $C \in \mathcal{P}\left(P(\vec{t})\right)$, we determine the MARS
associated with that combination of consumer tiles:
$$
\mathcal{M}_C = 
\bigcap_{\vec{w} \in C} \left(\Phi(\vec{w}) \cap B\left<T(\vec{t})\right>\right)
\setminus \bigcup_{\vec{w} \notin C} \left(\Phi(\vec{w}) \cap
B\left<T(\vec{t})\right>\right)$$

\subsection{Case of multiple, uniformly intersecting dependences}
\label{sec:unif-int-dep}

\subsubsection{General case}
If the dependences are uniformly intersecting, they all have the same linear 
part. This means that they all have the same null space, and therefore the
space decomposition into $E = I \oplus \mathsf{ker}(A)$ still applies. 

Let there be $Q$ dependences $B_1, \dots, B_Q$ that are uniformly intersecting.
This means that there exists an unique matrix $A$ such that:
$$ \forall i : B_i(\vec{x}) = A\vec{x} + \vec{b}_i $$

Because all tiles share the same linear part, the space of consumer tiles for
each dependence will be the same \emph{up to a translation}. Their linear part
will notably be the same, and the same argument as in the case above holds to
guarantee that $P(\vec{t}) = \left\lbrace W(\vec{t'} - \vec{t}) :\vec{t'} \in
V(\vec{t}) \right\rbrace$ is finite.

Let, by abuse of the notation, $B\left<T(\vec{t})\right>$ be the combined 
footprint of all dependences:
$$ B\left<T(\vec{t})\right> = \bigcup_{i=1}^{Q} B_i\left<T(\vec{t})\right>$$

For each dependence $B_i$ with $i \in \lbrace 1, \dots, Q\rbrace$, we therefore
compute $ V_i(\vec{t})$ by intersecting $B_i\left<T(\vec{t'})\right>$ and 
$B\left<T(\vec{t})\right>$ (i.e. we want the intersection of the footprint of
\emph{one dependence} and the footprint of \emph{all other dependences}); let $$
V(\vec{t})=\bigcup_{i=1}^{Q}V_i(\vec{t})$$ 
The same decomposition
$E=\mathsf{vect}(\vec{e}_1, \dots, \vec{e}_r) \oplus \mathsf{vect}(\vec{e}_r+1,
\dots, \vec{e}_d)$ is applicable due to all $B_i$s sharing the same linear part
$A$.

We can give a more meaningful expression for $P(\vec{t})$:
$$ P(\vec{t}) = \left\lbrace W(\vec{t'} - \vec{t}) :\vec{t'} \in
\bigcup_{i=1}^{Q} V_i(\vec{t}) \right\rbrace $$
which means that $P(\vec{t})$ is composed of the projections of the vectors 
leading to any consumer tile of \emph{any dependence} (and therefore takes
into account the uniform translations between dependences).

The MARS can be computed by using $P(\vec{t})$. There are two differences with
the case when there is only a single dependence:
\begin{itemize}
  \item The footprints of the consumer tiles $\Phi(\vec{w})$ are specific to 
  each dependence,
  \item The footprint of the tile $\vec{t}$ is the \emph{union} of the footprint
  of all dependences.
\end{itemize}
For $i \in
\lbrace 1, \dots, Q \rbrace$, let $\Phi_i(\vec{w})$ be:
$$\Phi_i(\vec{w}) = B_i\left<T(\vec{t}) + \vec{w}\right> \text{ where }
T(\vec{t}) + \vec{w} = \left\lbrace \vec{x} + \vec{w} : \vec{x} \in T(\vec{t})
\right\rbrace $$

The MARS are constructed by taking all subsets of consumer tiles from
$P(\vec{t})$, and looking at the points consumed \emph{only} by these tiles.

Formally, let the cardinality of $P(\vec{t})$ be $\#C$. For all $K : 1 \leqslant
K \leqslant \#C$ and all permutations $\sigma$ of $\lbrace 1,\dots, \#C
\rbrace$, let $$C= \left\lbrace \vec{t}_{\sigma(1)}, \dots, \vec{t}_{\sigma(K)}
\right\rbrace \text{ and } \overline{C} = \lbrace \vec{t}_{\sigma(K+1)}, \dots,
\vec{t}_{\sigma(\#C)} \rbrace$$ Then, a MARS is constructed according to the
following rules:
\begin{itemize}
  \item For each consumer tile coordinates $\vec{t'} \in C$, there exists a 
  dependence leading to $T(\vec{t'})$,
  \item No dependence leads to a consumer tile $\vec{t'} \in \overline{C}$
\end{itemize}
These two conditions to form a MARS can be written as:
$$\mathcal{M}_{C} = \bigcap_{\vec{w} \in C} \left(\bigcup_{i=1}^{Q}
\Phi_i(\vec{w}) \cap B\left<T(\vec{t})\right>\right) \setminus
\bigcup_{\vec{w} \in \overline{C}} \left(\bigcup_{i=1}^{Q} \Phi_i(\vec{w}) \cap
B\left<T(\vec{t})\right>\right)$$ and there are at most
$\mathsf{card}(\mathcal{P}(P(\vec{t}))) = 2^{\mathsf{card}(P(\vec{t}))}$ $C$s
and therefore as many MARS.

\subsubsection{Example: uniform dependences}
\label{sec:unif-int-dep-example}
In this paragraph, we show that the computation of MARS using \cite{Ferry_2023}
coincides with that proposed in this paper when the dependences are uniform.
Such dependences are a special case of uniformly intersecting dependences,
with a linear part being identity. Note that the destination space is considered
to be a data space, and therefore dependences within a tile are counted in the
footprint (self-consumption of data produced by a tile is dealt with in the
next section). 

Consider the Jacobi 1D example:
\begin{itemize}
  \item Domain: $\left\lbrace (i, j) : 0 \leqslant i < N, 0 \leqslant j < M 
  \right\rbrace$, basis vectors $\vec{e}_i, \vec{e}_j$
  \item Dependences : $B_1(i, j) = (i-1, j-1)$, $B_2(i, j) = (i, j-1)$, 
  $B_3(i, j) = (i+1, j-1)$
  \item Tiling hyperplanes : $H_1 : i + j$ ($\vec{n}_2 = (1, 1)$), $H_2 : j - i$ 
  ($\vec{n}_2 = (-1, 1)$)
  \item Tile size : $s \in \mathbf{N}^{*}$
\end{itemize}

We compute the unified footprint $B\left<T(\vec{t})\right>$:
$$ \begin{aligned} B\left<T(\vec{t})\right> = \left\lbrace (i, j) :
si_1 \leqslant i + j + (2 - p) < s(1 + i_1) \right. &\\
\left. \wedge si_2 \leqslant j - i + p < s(1 + i_2) : p \in \lbrace 0, 1, 2
\rbrace \right\rbrace & \end{aligned}$$ Notably, if we confuse the data space
$\mathcal{A}(i, j)$ and the iteration space $(i, j)$ (that is, each cell of 
$\mathcal{A}$ contains the result of one iteration), and we restrict the
footprint to those points outside tile $T(\vec{t})$, we obtain the
\emph{flow-in} of that tile as in Figure~\ref{fig:mars-uniform-flow-in},
corresponding to the same definition as in \cite{Ferry_2023}.

\begin{figure}
\centering
\includegraphics[width=\columnwidth]{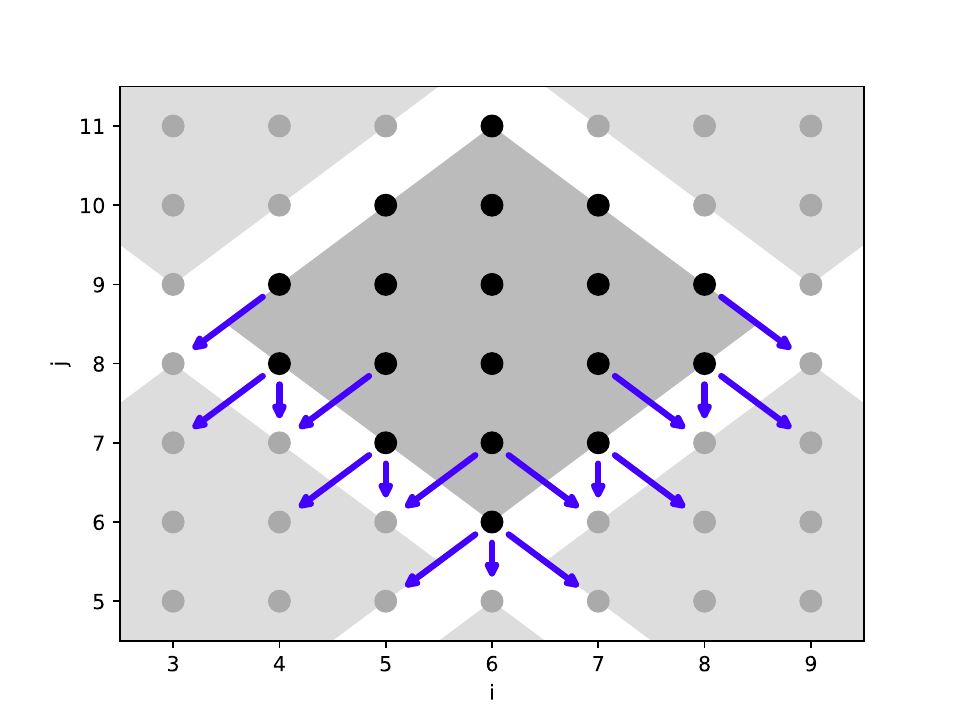}
\caption{Flow-in dependences of tile $T(\vec{t})$ with uniformly intersecting
dependences (Jacobi 1D).}
\label{fig:mars-uniform-flow-in}
\end{figure}

We determine the individual $V_i(\vec{t})$s:
$$
\begin{aligned}
V_1(\vec{t}) &= \left\lbrace (i_1, i_2 - 1), (i_1, i_2), (i_1 + 1, i_2 - 1), 
(i_1 + 1, i_2) \right\rbrace \\
V_2(\vec{t}) &= \left\lbrace (i_1, i_2), (i_1 + 1, i_2), (i_1, i_2 - 1), 
(i_1 - 1, i_2), (i_1, i_2 + 1) \right\rbrace \\
V_3(\vec{t}) &= \left\lbrace (i_1, i_2), (i_1 - 1, i_2), (i_1, i_2 + 1),
(i_1 - 1, i_2 + 1) \right\rbrace \\
\end{aligned}
$$
which gives 
$$ 
\begin{aligned}
V(\vec{t}) = & \left\lbrace (i_1, i_2), (i_1 - 1, i_2), (i_1, i_2 + 1), 
(i_1 - 1, i_2 + 1), (i_1 + 1, i_2), \right.\\
 & \left. (i_1, i_2 - 1), (i_1 + 1, i_2 - 1) \right\rbrace 
\end{aligned}
$$

As $\mathsf{ker}(A) = \lbrace 0 \rbrace$, we easily get that $E =
\mathsf{vect}(\vec{e}_i, \vec{e}_j)$ and therefore constructing the
$W(\vec{t'}-\vec{t})$ is straightforward, yielding the following $P(\vec{t})$:
$$ P(\vec{t}) = \left\lbrace (0, 0), (-1, 0), (0, 1), (-1, 1), (1, 0), (0, -1),
(1, -1) \right\rbrace $$ This $P(\vec{t})$ means there are seven tiles
(including $\vec{t}$ itself) which footprint (i.e. \emph{any dependence})
intersects with $B\left<T(\vec{t})\right>$. These consumer tiles are shown
in Figure~\ref{fig:mars-uniform-intersect-consumers}.
\begin{figure}
\centering
\includegraphics[width=\columnwidth]{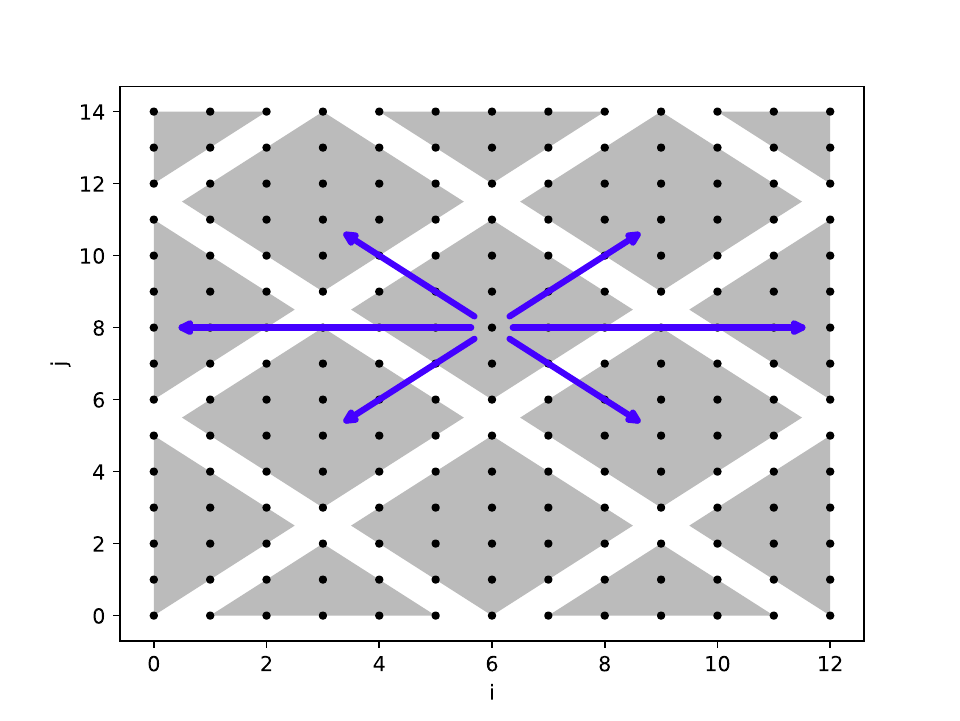}
\caption{Consumer tiles of array $\mathcal{A}$ with Jacobi 1D dependences, 
sharing their footprint with tile $T(\vec{t})$.}
\label{fig:mars-uniform-intersect-consumers}
\end{figure}

For the sake of shortness, we will not enumerate all combinations of consumer
tiles. The MARS that appear after partitioning the footprints stemming from
all consumer tiles are shown in Fig.\ref{fig:mars-uniform-mars-with-tile}.

\begin{figure}
	\centering
	\includegraphics[width=\columnwidth]{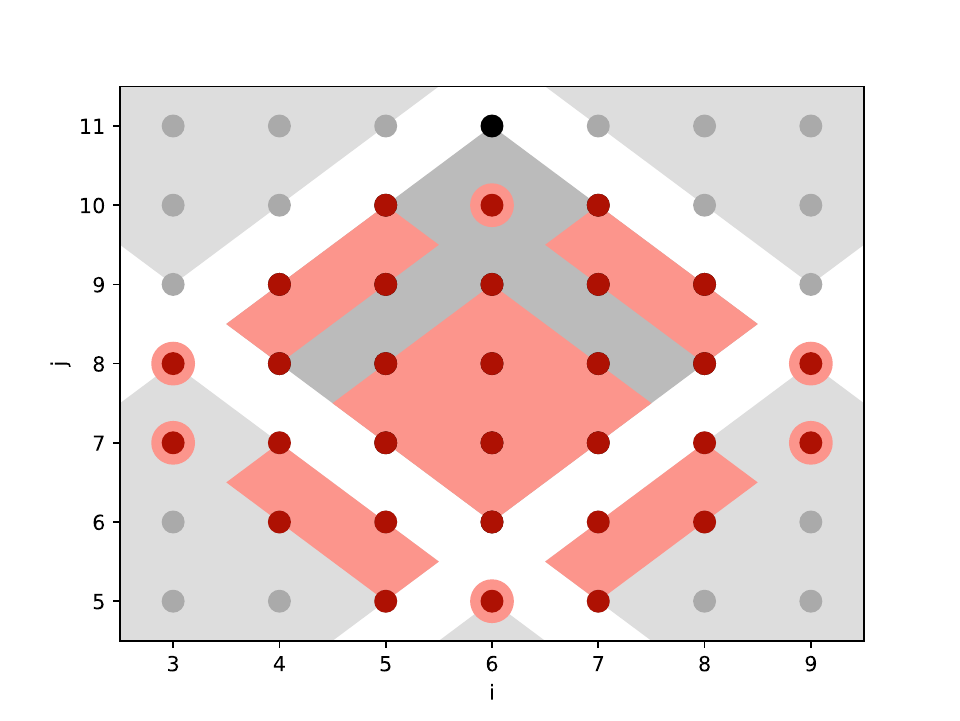}
	\caption{MARS for uniformly intersecting dependences (Jacobi 1D)}
	\label{fig:mars-uniform-mars-with-tile}
\end{figure}

Again, if we confuse the iteration and data spaces, we can obtain the same
MARS as computed in \cite{Ferry_2023} by removing those MARS that are 
contained within $T(\vec{t})$; the result of this operation, shown in 
Figure~\ref{fig:mars-uniform-mars-without-tile}, illustrates the coincidence
between the MARS computed using uniform and affine dependences.
\begin{figure}
	\centering
	\includegraphics[width=\columnwidth]{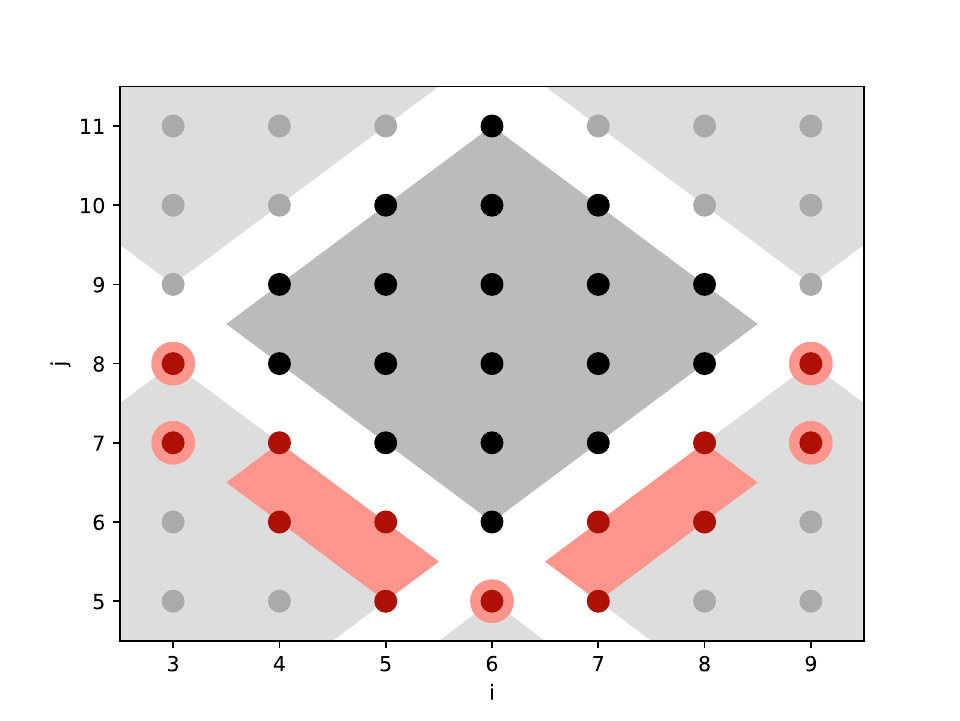}
	\caption{Coincidence between MARS computed with the uniform dependence
	method, and those computed with the affine method.}
	\label{fig:mars-uniform-mars-without-tile}
\end{figure}

\subsection{Case of multiple, non-uniformly intersecting dependences}
\label{sec:non-unif-dep}

We now consider the case where the dependences are not uniformly intersecting.
In this case, the main difference is that dependences no longer share the same
linear part. Therefore, we need to write every dependence separately:
$$ \forall i \in \lbrace 1, \dots, Q \rbrace : B_i(\vec{x}) = A_i\vec{x} +
\vec{b}_i$$
and each dependence having its own null space, there is one orthonormal basis
of the null space and a supplementary per dependence, and therefore one 
projection $W_i(\vec{t'}-\vec{t})$ per dependence.

\subsubsection{Single null space requirement}
\label{sec:non-unif-dep-issue}

Because the dependences may no longer have the same linear part, each linear
part may have a different null space. When considering any consumer tile
$\vec{t'} \in V(\vec{t})$, it is no longer true that the projection of $\vec{t'}
- \vec{t}$ onto each null space is independent of the tile coordinates.
The invariance by translation of a tile from
Proposition~\ref{prop:one-dep-cons-tile-finite-uniform} therefore no longer
holds.

Figure~\ref{fig:mars-non-uniform-shift-issue} gives an example of this case with
two dependences: $B_1(i, j) = (i - j)$ and $B_2(i, j) = (i + j)$. Here,
the $\vec{p}$s depend on $\vec{t}$. Due to the dependence $B_1$, all the tiles
northwest of each tile will intersect with its footprint; but the dependence
$B_2$ generates a footprint southwest, also consumed (because of $B_1$) by all
tiles to its northwest.

In Figure~\ref{fig:mars-non-uniform-shift-issue}, we show the
$W_1(\vec{t'}-\vec{t})$ : $\ker(A_1)$ points to the northeast, and $I_1$
(supplementary of $\ker(A_1)$) points to the northwest, parallel to the 
dependence $B_2$. 

\begin{figure}
	\centering
	\includegraphics[width=\columnwidth]{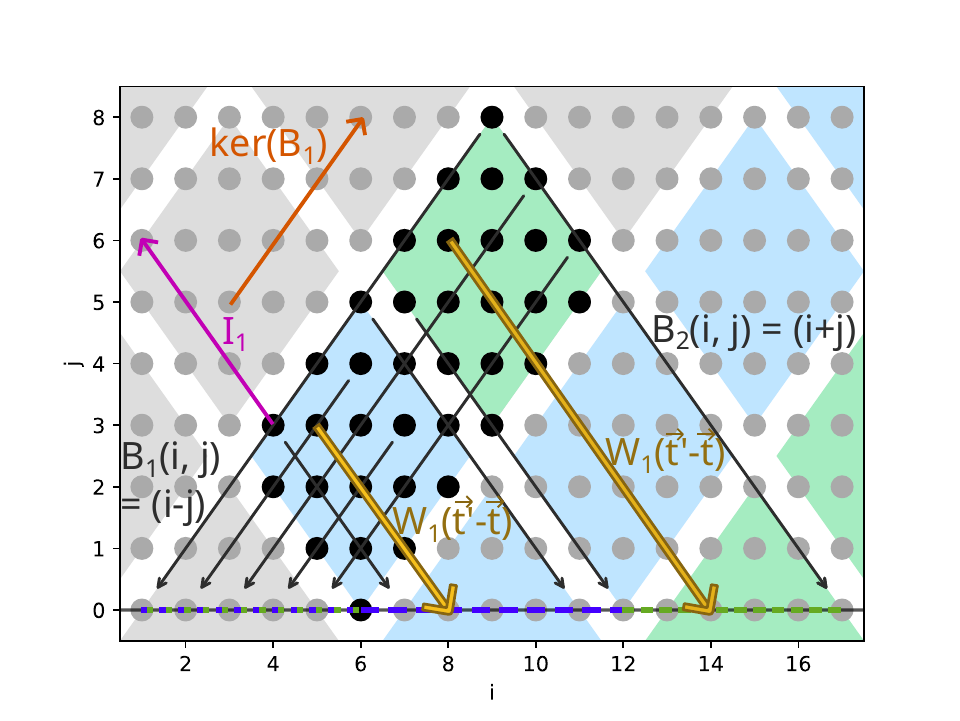}
	\caption{Non-uniformly intersecting dependences do not guarantee that the
	vectors	$W_i(\vec{t'} - \vec{t})$ do not depend on $\vec{t}$, 
	i.e. each tile's footprint is not necessarily a translation of another 
	tile's footprint.}
	\label{fig:mars-non-uniform-shift-issue}
\end{figure}

A sufficient condition for a position-independent footprint to exist is that all
dependences have the same null space:
\begin{proposition}

If all dependences have the same null space, then all tiles have the same
footprint up to a translation. Otherwise said, for any $\vec{\delta} \in
\mathbf{Z}^t$, there exists $\vec{u} \in \mathsf{Im}(B)$ such that:

For each $\vec{t} \in \mathbf{Z}^t$, if $\vec{t'} = \vec{t} + \vec{\delta}$, 
then $$ \bigcup_{i=1}^{Q} B_i\left<T(\vec{t'})\right>
= \left\lbrace \vec{y} + \vec{u} : \vec{y} \in \bigcup_{i=1}^{Q}
B_i\left<T(\vec{t})\right> \right\rbrace$$

\end{proposition}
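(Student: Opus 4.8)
The plan is to reduce the multi-dependence statement to a repeated application of the single-dependence result (Proposition~\ref{prop:one-dep-cons-tile-finite-uniform}), exploiting the shared null space. Let $K = \mathsf{ker}(A_1) = \dots = \mathsf{ker}(A_Q)$ be the common null space, and fix a single decomposition $E = I \oplus K$ together with an orthonormal basis adapted to it, exactly as in Section~\ref{sec:one-dep}. Because every $A_i$ kills $K$, each $A_i$ agrees with $A_i$ restricted to $I$, and for every tiling normal $\vec{n}_j$ we have $A_i\vec{n}_j = A_i\vec{n}_j^p$ where $\vec{n}_j^p$ is the projection of $\vec{n}_j$ onto $I$ — this is the one place where the common-null-space hypothesis is essential, and it is the same identity (highlighted in red) that drove the single-dependence proof.

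First I would carry out the computation from the ``Completeness of footprints'' part of Proposition~\ref{prop:one-dep-cons-tile-finite-uniform}, but now for each $B_i$ separately and with the shift vector $\vec{\delta}$ fixed in advance rather than extracted from $V(\vec{t})$. Writing $\vec{t'} = \vec{t} + \vec{\delta}$ and $\vec{w} = \sum_{j=1}^{t}\delta_j\vec{n}_j^p \in I$, the same chain of equalities gives
\begin{equation*}
B_i\left<T(\vec{t'})\right> = \left\lbrace B_i(\vec{x}) + A_i\vec{w} : \vec{x} \in T(\vec{t}) \right\rbrace
= \left\lbrace \vec{y} + A_i\vec{w} : \vec{y} \in B_i\left<T(\vec{t})\right> \right\rbrace,
\end{equation*}
with $\vec{w}$ a constant vector depending only on $\vec{\delta}$ (and the fixed geometry), not on $\vec{t}$. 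The only subtlety versus the single-dependence case is that the additive constant is now $A_i\vec{w}$, which a priori differs from dependence to dependence. So the second step is to set $\vec{u} := A_1\vec{w} = A_2\vec{w} = \dots = A_Q\vec{w}$ and argue this common value is well-defined: here I would either (a) assume, as is implicit in the surrounding development, that the affine parts $\vec{b}_i$ and the linear actions on $I$ are arranged so that $A_i\vec{w}$ is independent of $i$ — i.e.\ the $A_i$ coincide on $I$ up to the translation bookkeeping — or (b) more cleanly, observe that since all $A_i$ have the same null space $K$ and the tiling translation lives in the same $I$, one may normalize the basis of $I$ so that each $A_i$ restricted to $I$ is the relevant block of a single linear map; then $A_i\vec{w}$ is literally the same vector $\vec{u}$ for all $i$. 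With $\vec{u}$ fixed, taking the union over $i = 1, \dots, Q$ of the displayed identity yields
\begin{equation*}
\bigcup_{i=1}^{Q} B_i\left<T(\vec{t'})\right>
= \bigcup_{i=1}^{Q}\left\lbrace \vec{y} + \vec{u} : \vec{y} \in B_i\left<T(\vec{t})\right> \right\rbrace
= \left\lbrace \vec{y} + \vec{u} : \vec{y} \in \bigcup_{i=1}^{Q} B_i\left<T(\vec{t})\right> \right\rbrace,
\end{equation*}
which is exactly the claim, and $\vec{u} \in \mathsf{Im}(B)$ since $\vec{u} = A_1\vec{w}$ lies in $\mathsf{Im}(A_1) \subseteq \mathsf{Im}(B)$.

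The main obstacle, as flagged above, is pinning down precisely why $A_i\vec{w}$ does not depend on $i$ — the proposition as stated hides this behind ``the same null space,'' but equality of null spaces alone only forces the $A_i$ to agree up to an automorphism of $I$, not on the nose. I would resolve it by making explicit the (mild) additional hypothesis already in force in this section, namely that the tiling and the supplementary $I$ are chosen once and that the relevant footprint geometry is measured in the common image coordinates; concretely, $A_i\vec{w}$ is the image of the translation $\sum_j \delta_j \vec{n}_j$, and since $\sum_j \delta_j \vec{n}_j - \vec{w} \in K$ is annihilated identically, $A_i\vec{w} = A_i\left(\sum_j \delta_j \vec{n}_j\right)$ for every $i$ — so the question becomes whether $A_i\left(\sum_j\delta_j\vec{n}_j\right)$ is $i$-independent, which holds exactly when the $A_i$ coincide as maps on the tiling lattice's span, a property that should be stated as the precise meaning of ``same footprint up to translation'' here. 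A secondary, purely cosmetic point is that $\vec{u}$ must be shown independent of $\vec{t}$: this is immediate from the ``Constantness of $\vec{p}$'' argument in Proposition~\ref{prop:one-dep-cons-tile-finite-uniform}, applied verbatim to each $B_i$, since $\vec{w}$ was built only from $\vec{\delta}$ and the fixed projected normals $\vec{n}_j^p$.
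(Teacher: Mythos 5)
Your strategy is the paper's own: apply the single-dependence translation result (Proposition~\ref{prop:one-dep-cons-tile-finite-uniform}) to each $B_i$, obtain per-dependence shifts, and argue that they coincide so the union translates rigidly. The obstacle you flag --- why $A_i\vec{w}$ is independent of $i$ when only the null spaces are assumed equal --- is exactly the step the paper's proof does not actually close. The paper writes the per-dependence shift as $\vec{u}_i = \sum_j (t'_j - t_j)\vec{n}^p_j$ and declares the $\vec{u}_i$ equal because ``they have the same projection onto the same supplementary''; but that vector lives in the \emph{source} space and is trivially $i$-independent once $I$ is fixed. The quantity that actually translates $B_i\left<T(\vec{t})\right>$ is $A_i\vec{w}$, in the image space, and equal null spaces do not force $A_1\vec{w} = A_2\vec{w}$. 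So you have correctly located a genuine gap, and it is present in the paper's proof as well.

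Indeed the proposition as stated is false. Take $B_1(i,j)=(i)$ and $B_2(i,j)=(2i)$, both with null space $\mathsf{vect}(\vec{e}_j)$, and canonical tiling with tile size $s=2$. The combined footprint of $T(0,0)$ is $\lbrace 0,1\rbrace \cup \lbrace 0,2\rbrace = \lbrace 0,1,2\rbrace$ (three points), while that of $T(1,0)$ is $\lbrace 2,3\rbrace \cup \lbrace 4,6\rbrace$ (four points); these cannot be translates of one another. Consequently your repair (a) --- adding the hypothesis that the $A_i$ agree on the span of the tiling normals, i.e.\ that $A_i\vec{w}$ is $i$-independent --- is not a cosmetic hedge but a necessary strengthening of the statement. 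Your alternative (b), however, does not work: a change of basis of $I$ changes coordinates, not the maps themselves, so it cannot make distinct $A_i$ agree on a prescribed vector $\vec{w}$ determined by the tiling. The remainder of your argument (the per-dependence reduction, $\vec{u}\in\mathsf{Im}(B)$, and the independence of $\vec{u}$ from $\vec{t}$ via the constantness argument) is sound and matches the paper.
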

\begin{proof}
We know that the sought $\vec{u}$ exists for each dependence per
Proposition~\ref{prop:one-dep-cons-tile-finite-uniform}: for each $i\in\lbrace 1, \dots,
Q\rbrace$, there is a $\vec{u}_i$ such that $$B_i\left<T(\vec{t'})\right> =
\left\lbrace \vec{y} + \vec{u}_i : \vec{y} \in B_i\left<T(\vec{t})\right>
\right\rbrace$$

This $\vec{u}_i$ is constructed as:
$$ \vec{u}_i = \sum_{i=1}^{t}(t'_i - t_i)\vec{n}^p_i $$ where the $\vec{n}^p_i$s
are the projections of the normal vectors onto a supplementary of the null space
of each dependence. Because all of the dependences have the same null space, it
comes that all of the $\vec{u}_i$ have the \emph{same} projection onto the
\emph{same} supplementary of that null space. Therefore, they are all equal.
\end{proof}

\subsubsection{Constructon of MARS with a single null space}

We must prove the requirements stated in
Section~\ref{sec:comparison-unif-affine}, proved in the previous two cases,
still hold to compute the MARS.

The uniqueness of $\vec{p}$ (invariance by translation of a tile) has become a
hypothesis, and the dependences must satisfy this requirement to compute MARS.
The previous paragraph only gave a sufficient condition for it to be satisfied.

The finiteness (and enumerability) of the set representatives of consumer tiles
still holds if the dependences all have the same null space.

We can construct the MARS using the same procedure as in \ref{sec:unif-int-dep}:
the footprints of all dependences are distinct, but the null space is the same,
therefore the same definition for $P(\vec{t}) = \left\lbrace W(\vec{t'} -
\vec{t}) \right\rbrace$ as in \ref{sec:unif-int-dep} holds.

\subsubsection{Case of multiple null spaces}

If the dependences have multiple null spaces, there is no guarantee that MARS 
can be constructed (see~\ref{sec:non-unif-dep-issue} for a counter-example).

Due to the null spaces being different, there is no guarantee that:
\begin{itemize}
  \item The consumer tiles which footprint intersects with that of $T(\vec{t})$
  are located at uniform translations of $\vec{t}$ (see counter-example at 
  \ref{sec:non-unif-dep-issue}), and
  \item The projections of translations from $\vec{t}$ to all other consumer 
  tiles onto every null space are finite sets of vectors, i.e. the method used
  previously to obtain a finite set of representative consumer tiles still gives
  a finite set.
\end{itemize}

We propose a solution to the second point: we can obtain finite sets of
translation vectors to represent all consumer tiles for all dependences,
although these translations may be parametric.

The idea is to only consider the projection of dependences that contribute to
a tile's footprint intersecting with $B\left<T(\vec{t})\right>$; and make a
partition of all the sets of consumer tiles according to the contributing
dependences.

Indeed, it is equivalent to say that a dependence $B_i$ contributes
to the consumption, and that the consumer tile $\vec{t'}$ is in $V_i(\vec{t})$.
Multiple dependences contribute to the consumption if and only if $\vec{t'}$ is
simultaneously in all the $V_i(\vec{t})$s of these dependences (i.e. it is in
their intersection).

Using this fact, we can partition the space of all consumer tiles according to
which dependences contribute to each family. In other words, given $D \in
\mathcal{P}(\lbrace 1, \dots, Q\rbrace)$, we compute:
$$ F_D(\vec{t}) = \bigcap_{i \in D} V_i(\vec{t}) \setminus \bigcup_{i \notin D}
V_i(\vec{t})$$
\begin{proposition}
\label{prop:consumer-tile-uniqueness}
Let $\vec{t'} \in \bigcup_{i=1}^{Q} V_i(\vec{t})$. There exists a unique $D \in
\mathcal{P}(\lbrace 1, \dots, Q\rbrace)$ (i.e. a unique $D \subset \lbrace 1,
\dots, Q\rbrace$) such that $\vec{t'} \in F_D$. In other words, $\left\lbrace
F_D : D \subset \lbrace 1, \dots, Q\rbrace \right\rbrace$ is a partition of the
set of all consumer tiles $\bigcup_{i=1}^{Q} V_i(\vec{t})$.
\end{proposition}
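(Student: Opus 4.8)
The claim is a purely set-theoretic fact: the family $\{F_D : D \subset \{1,\dots,Q\}\}$ defined by $F_D(\vec{t}) = \bigcap_{i \in D} V_i(\vec{t}) \setminus \bigcup_{i \notin D} V_i(\vec{t})$ partitions $\bigcup_{i=1}^{Q} V_i(\vec{t})$. The plan is to fix $\vec{t}$ (it is merely a parameter throughout, so I will suppress it) and, given a consumer tile $\vec{t'} \in \bigcup_i V_i$, define $D := \{ i \in \{1,\dots,Q\} : \vec{t'} \in V_i \}$. This $D$ is the "membership signature" of $\vec{t'}$. I would then verify, in order: (i) $D$ is nonempty (since $\vec{t'}$ lies in the union, it lies in at least one $V_i$), though nonemptiness is not actually needed for the partition claim; (ii) $\vec{t'} \in F_D$; and (iii) $\vec{t'} \notin F_{D'}$ for every $D' \neq D$.

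For step (ii): by construction $\vec{t'} \in V_i$ for every $i \in D$, so $\vec{t'} \in \bigcap_{i \in D} V_i$; and by construction $\vec{t'} \notin V_i$ for every $i \notin D$, so $\vec{t'} \notin \bigcup_{i \notin D} V_i$. Hence $\vec{t'} \in F_D$. For step (iii): suppose $D' \neq D$. Then either there is some $j \in D' \setminus D$, in which case $\vec{t'} \notin V_j$ (because $j \notin D$) yet $F_{D'} \subseteq \bigcap_{i \in D'} V_i \subseteq V_j$, so $\vec{t'} \notin F_{D'}$; or there is some $j \in D \setminus D'$, in which case $\vec{t'} \in V_j$ (because $j \in D$) yet $F_{D'} \cap V_j = \varnothing$ since $F_{D'} \subseteq \bigcup_{i \notin D'} V_i$'s complement — more precisely $F_{D'} = (\bigcap_{i \in D'} V_i) \setminus \bigcup_{i \notin D'} V_i$ and $j \notin D'$ forces $F_{D'} \cap V_j = \varnothing$, so again $\vec{t'} \notin F_{D'}$. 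This establishes uniqueness of the $D$ with $\vec{t'} \in F_D$.

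Finally, to conclude it is a genuine partition I would note that the $F_D$ are pairwise disjoint (immediate from uniqueness just shown, since a common element would have two distinct signatures) and that their union is all of $\bigcup_i V_i$ (every element has a signature $D$ and lands in $F_D$ by step (ii); conversely each $F_D$ is contained in $\bigcup_i V_i$ whenever $D \neq \varnothing$, and $F_\varnothing = \bigcap_{i \in \varnothing} V_i \setminus \bigcup_i V_i$, which intersected with $\bigcup_i V_i$ is empty, so $F_\varnothing$ contributes nothing). There is essentially no obstacle here — the only thing to be slightly careful about is the degenerate block $F_\varnothing$ and whether one wants the empty set to count as a block of the partition; the cleanest phrasing is to restrict attention to nonempty $D$, or equivalently to observe $F_\varnothing \cap \bigcup_i V_i = \varnothing$ so it is harmless. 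The real content of the proposition is not the proof but the setup: it licenses treating "which dependences contribute" as a well-defined attribute of each consumer tile, which is what the subsequent MARS construction needs.
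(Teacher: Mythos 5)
Your proof is correct and takes essentially the same approach as the paper, which simply asserts that the construction of the $F_D(\vec{t})$s over the finite index set $\lbrace 1,\dots,Q\rbrace$ forces disjointness and coverage; you make that assertion explicit via the membership-signature $D = \lbrace i : \vec{t'} \in V_i(\vec{t})\rbrace$. Your extra care about the degenerate block $F_\varnothing$ is a harmless refinement the paper leaves implicit.
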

\begin{proof}
The construction of the $F_D(\vec{t})$s, given that $\lbrace 1, \dots, Q\rbrace$
is a finite set, guarantees the fact all the $F_D$s are disjoint and therefore
creates a partition of $\bigcup_{i=1}^{Q} V_i(\vec{t})$.
\end{proof}

For each $D \in \mathcal{P}(\lbrace 1, \dots, Q\rbrace)$, $F_D(\vec{t})$
contains a family of consumer tiles (possibly empty). If it is not empty, then
using the same reasoning as in
Proposition~\ref{prop:one-dep-cons-tile-finite-uniform} we can prove the
following proposition:
\begin{proposition}
Let $D \in \mathcal{P}(\lbrace 1, \dots, Q\rbrace)$. Then:
$$P_D(\vec{t}) = \left\lbrace W_i(\vec{t'}-\vec{t}) : \vec{t'} \in F_D \wedge
i\in D\right\rbrace$$ is a finite set.
\end{proposition}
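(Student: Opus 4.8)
The plan is to reduce the statement, one dependence at a time, to the finiteness argument already proved in Proposition~\ref{prop:one-dep-cons-tile-finite-uniform}. Since $D \subseteq \lbrace 1, \dots, Q\rbrace$ is finite, we may write $P_D(\vec{t}) = \bigcup_{i\in D}\lbrace W_i(\vec{t'}-\vec{t}) : \vec{t'}\in F_D\rbrace$, so it suffices to show that for each fixed $i \in D$ the set $\lbrace W_i(\vec{t'}-\vec{t}) : \vec{t'}\in F_D\rbrace$ is finite: a finite union of finite sets is finite.

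First I would note that $F_D(\vec{t}) = \bigcap_{i\in D} V_i(\vec{t}) \setminus \bigcup_{i\notin D} V_i(\vec{t}) \subseteq V_i(\vec{t})$ for every $i \in D$, so for such an $i$ and any $\vec{t'}\in F_D$ we have $\vec{t'}\in V_i(\vec{t})$, i.e. $B_i\langle T(\vec{t'})\rangle \cap B\langle T(\vec{t})\rangle \neq \varnothing$; hence it is enough to bound the larger set $\lbrace W_i(\vec{t'}-\vec{t}) : \vec{t'}\in V_i(\vec{t})\rbrace$. I would then replay the \textbf{Finiteness of} $P(\vec t)$ part of the proof of Proposition~\ref{prop:one-dep-cons-tile-finite-uniform}, applied to $B_i$, the only difference being that the set on the other side of the intersection is the combined footprint $B\langle T(\vec{t})\rangle = \bigcup_{j=1}^Q B_j\langle T(\vec{t})\rangle$ rather than a single footprint. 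Writing $\vec{p} = W_i(\vec{t'}-\vec{t})$, Proposition~\ref{prop:one-dep-cons-tile-finite-uniform} gives that $B_i\langle T(\vec{t'})\rangle$ is the translate of $B_i\langle T(\vec{t})\rangle$ by $A_i\vec{p} = \sum_{k=1}^t (t'_k-t_k)A_i\vec{n}_k$, which lies in the lattice $\Lambda_i$ generated by the fixed finite family $(A_i\vec{n}_1, \dots, A_i\vec{n}_t)$. Since $T(\vec{t})$ is bounded, both $B_i\langle T(\vec{t})\rangle$ and the finite union $B\langle T(\vec{t})\rangle$ are bounded, so for the translate to meet $B\langle T(\vec{t})\rangle$ the vector $A_i\vec{p}$ must lie in the bounded set $\lbrace \vec{u}-\vec{v} : \vec{u}\in B\langle T(\vec{t})\rangle,\ \vec{v}\in B_i\langle T(\vec{t})\rangle\rbrace$, which contains only finitely many points of $\Lambda_i$. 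Thus $A_i\vec{p}$ takes finitely many values; as $A_i$ is injective on the supplementary $I_i$ of $\mathsf{ker}(A_i)$ (Proposition~\ref{prop:supplementary-of-kernel}) and $\vec{p}=W_i(\vec{t'}-\vec{t})\in I_i$, the vector $\vec{p}$ itself takes finitely many values. Taking the union over $i\in D$ finishes the argument.

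The hard part is the bookkeeping in the second step: checking that the single-dependence finiteness argument survives when the target side is the union $\bigcup_j B_j\langle T(\vec{t})\rangle$ and the linear parts $A_i$ have different null spaces. The two things to verify are (i) that $F_D \subseteq V_i(\vec{t})$ for $i \in D$, so that each such $i$ genuinely falls under the hypotheses of Proposition~\ref{prop:one-dep-cons-tile-finite-uniform}, and (ii) that the cited argument used only boundedness of the two footprints and the lattice structure of $\lbrace A_i\vec{p}\rbrace$, neither of which is affected by replacing a single footprint with a finite union, nor by the kernels differing across dependences. The choice of supplementary $I_i$, hence of $W_i$, is immaterial for finiteness. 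Everything else is the same computation as in Proposition~\ref{prop:one-dep-cons-tile-finite-uniform}.
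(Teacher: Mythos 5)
Your proof is correct and follows essentially the same route as the paper's, which simply observes that $F_D(\vec{t}) \subset \bigcap_{i\in D} V_i(\vec{t})$ and invokes Proposition~\ref{prop:one-dep-cons-tile-finite-uniform} for each $i \in D$. You supply considerably more detail than the paper's one-sentence proof — in particular the check that the boundedness-plus-lattice argument survives replacing a single footprint by the combined footprint $B\left<T(\vec{t})\right>$, and the finite union over $i \in D$ — all of which is sound.
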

\begin{proof}
Considering that $F_D(\vec{t}) \subset \bigcap_{i \in D} V_i(\vec{t})$,
Proposition~\ref{prop:one-dep-cons-tile-finite-uniform} can be applied to each
individual $V_i(\vec{t})$.
\end{proof}

The effects of parametric vectors leading to consumer tiles are unknown at this
point. Whether MARS can be constructed in this case is left as an open question. 

\subsection{Case of dependences between tiled spaces}
\label{sec:tiled-spaces}

Dependences that lead to tiled spaces correspond to the passing of intermediate
results between tiles. These dependences were supported in \cite{Ferry_2023},
and transmission of intermediate results was done through MARS transiting in
the main memory. This produced a partitioning of the \emph{flow-out set} and
\emph{flow-in set} of each tile. In this section, we extend this principle to
affine dependences.

Uniform dependences used in \cite{Ferry_2023} guaranteed that the producer and 
consumer were in the same space (which is not the case with affine dependences),
and the identity linear part of the dependences gave that the image of a tile by
a dependence was a translation of the tile itself.

The main problem with having different consumer and producer spaces is the
relation between the consumer tiles' ``footprint'' in the producer tiles' space,
and the producer space tiling itself: the footprints of the consumer tiles by
the dependences produce a tiling that may not match with the existing tiling of
the producer space.

% We ideally want this footprint to be invariant with 
% respect to the tile coordinates. This makes it possible to implement MARS-like
% allocations on GPU and FPGA platforms where the on-chip and off-chip addresses
% and lengths of MARS do not have to be dynamically computed.

% Given that iteration tiles are periodically repeated in the iteration space,
% their footprint by affine dependences likewise make a repeated pattern in the
% destination space. For the MARS to be the same for every consumer tile,
% \textbf{the dependence footprints need to be invariant by translating by one 
% tile} (regardless of which hyperplane).

In the previous sections (\ref{sec:one-dep}, \ref{sec:unif-int-dep},
\ref{sec:non-unif-dep}), the existence of MARS relied on the footprints of the
consumer tiles (in a tiled iteration space) in the data space (hereafter
\emph{destination space}) being \textbf{independent of the consumer tile} (i.e.
the origin of the dependence). In this section, the destination space is a tiled
iteration space, and we want the tiling induced by the dependence to ``match''
the existing tiling or be finer than it. To this aim, we add the requirement is
that the same footprints are \textbf{independent of the producer tile} (i.e. the
destination of the dependence).

Assuming there are $t$ tiling hyperplanes in the source space, and $q$ tiling
hyperplanes in the destination space, let their (unit) normal vectors be
respectively $\vec{n}_1, \dots, \vec{n}_t$ and $\vec{d}_1, \dots, \vec{d}_q$ and
their tile sizes $s_1, \dots, s_t$ and $z_1, \dots, z_q$.
Let the scaled normal vectors (translation of one tile along each hyperplace) be
$\vec{\mathbf{n}}_1, \dots, \vec{\mathbf{n}}_t$ and $\vec{\mathbf{d}}_1, \dots,
\vec{\mathbf{d}}_q$.

Let there be $Q$ dependences $B_1, \dots, B_q$.
Let $\vec{t}=(t_1, \dots, t_t) \in \mathbf{CT}$ be a \emph{consumer tile} vector
coordinate, and let $\vec{u}=(u_1, \dots, u_q) \in \mathbf{PT}$ be a
\emph{producer tile} vector coordinate (in the destination space of the
dependences).
Let a tile in the destination space be designated as $U(\vec{t})$ using a
definition analogous to that of the source space (see \ref{sec:background}).
Let the consumer tiles of a destination tile $\vec{u} \in \mathbf{DT}$ be
$$X(\vec{u}) = \left\lbrace \vec{t} \in \mathbf{CT} : \exists j \in \lbrace
1, \dots, Q \rbrace : B_j\left<T(\vec{t})\right> \right\rbrace$$ and a tile 
translation vector in the destination space be expressed as:
$$N(\vec{u}) = \sum_{k=1}^{q} u_k \vec{\mathbf{d}}_k$$

The following proposition is a conjecture. It establishes the equivalence
between a translation of a tile in the producer space and the translation of
multiple tiles in the consumer space.

\begin{proposition}
We have the following equivalence:
\begin{gather*}
\forall \vec{u}, \vec{u'} \in \mathbf{DT}, \forall \vec{t} \in X(\vec{u}), \forall i \in \lbrace 1, \dots, Q \rbrace : \\
\exists \vec{t'} \in X(\vec{u'}) : B_i\left<T(\vec{t'})\right> = \left\lbrace \vec{y} + N(\vec{u'} - \vec{u}) : \vec{y} \in B_i\left<T(\vec{t})\right>\right\rbrace \\
\Updownarrow \\
\forall i \in \lbrace 1, \dots, Q \rbrace, \forall j \in \lbrace 1, \dots, t\rbrace, \forall k \in \lbrace 1, \dots, q \rbrace: \\
\exists m \in \mathbf{Z} : m((A_i\vec{\mathbf{n}}_j) \cdot \vec{d}_k)\vec{d}_k = \vec{\mathbf{d}}_k
\end{gather*}
\end{proposition}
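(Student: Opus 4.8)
The plan is to prove the two implications separately, after reducing both sides to a statement about integer lattices. The single ingredient needed for the reduction is the ``completeness of footprints'' identity from the proof of Proposition~\ref{prop:one-dep-cons-tile-finite-uniform}, which carries over verbatim with scaled normal vectors: for every source tile coordinate $\vec{t}$, every $\vec{\delta}\in\mathbf{Z}^t$, and every $i$,
$$ B_i\left<T(\vec{t}+\vec{\delta})\right>=\left\lbrace \vec{y}+{\textstyle\sum_{j=1}^{t}}\delta_j A_i\vec{\mathbf{n}}_j : \vec{y}\in B_i\left<T(\vec{t})\right>\right\rbrace, $$
because $T(\vec{t}+\vec{\delta})$ is the translate of $T(\vec{t})$ by $\sum_j\delta_j\vec{\mathbf{n}}_j$ and $A_i$ is linear; combine this with the linearity $N(\vec{u'}-\vec{u})=\sum_{k=1}^{q}(u'_k-u_k)\vec{\mathbf{d}}_k$. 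Write $\Lambda_i$ for the integer lattice generated by $A_i\vec{\mathbf{n}}_1,\dots,A_i\vec{\mathbf{n}}_t$ (the admissible footprint shifts of dependence $i$) and $\Lambda$ for the lattice generated by $\vec{\mathbf{d}}_1,\dots,\vec{\mathbf{d}}_q$ (the destination tiling lattice). Then the top line of the proposition says exactly that every realizable $N(\vec{u'}-\vec{u})\in\Lambda$ lies in every $\Lambda_i$, while the bottom line is a concrete divisibility test intended to certify $\Lambda\subseteq\Lambda_i$ by inspecting each generator $A_i\vec{\mathbf{n}}_j$ against each direction $\vec{d}_k$.

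From the arithmetic condition to the translation statement: fix $\vec{u},\vec{u'}\in\mathbf{DT}$, $\vec{t}\in X(\vec{u})$, and $i$, and expand $N(\vec{u'}-\vec{u})=\sum_k(u'_k-u_k)\vec{\mathbf{d}}_k$. The hypothesis supplies, for all $j,k$, an integer $m_{ijk}$ with $m_{ijk}\big((A_i\vec{\mathbf{n}}_j)\cdot\vec{d}_k\big)\vec{d}_k=\vec{\mathbf{d}}_k$; this commensurability is what lets one solve $\sum_{j=1}^{t}\delta_j A_i\vec{\mathbf{n}}_j=N(\vec{u'}-\vec{u})$ over $\mathbf{Z}$. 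Taking such a $\vec{\delta}$ and setting $\vec{t'}=\vec{t}+\vec{\delta}$, the displayed identity yields $B_i\left<T(\vec{t'})\right>=\lbrace\vec{y}+N(\vec{u'}-\vec{u}):\vec{y}\in B_i\left<T(\vec{t})\right>\rbrace$. It remains to check $\vec{t'}\in X(\vec{u'})$: since $\vec{t}\in X(\vec{u})$ the footprint $B_i\left<T(\vec{t})\right>$ meets the destination tile $U(\vec{u})$, and $N(\vec{u'}-\vec{u})$ translates $U(\vec{u})$ onto $U(\vec{u'})$, so the translated footprint $B_i\left<T(\vec{t'})\right>$ meets $U(\vec{u'})$.

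For the converse I argue by contraposition, building a witness from a failed instance $(i,j,k)$ of the arithmetic test. Choose $\vec{u'}$ to be $\vec{u}$ shifted by one destination tile along hyperplane $k$, so $N(\vec{u'}-\vec{u})=\vec{\mathbf{d}}_k$, and fix some $\vec{t}\in X(\vec{u})$. By the displayed identity, and because a bounded nonempty footprint admits no nonzero translation symmetry, any $\vec{t'}\in X(\vec{u'})$ achieving the required equality must satisfy $\sum_{j'}(t'_{j'}-t_{j'})A_i\vec{\mathbf{n}}_{j'}=\vec{\mathbf{d}}_k$, i.e. $\vec{\mathbf{d}}_k\in\Lambda_i$. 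The remaining task is to show that failure of the divisibility test forces $\vec{\mathbf{d}}_k\notin\Lambda_i$, which contradicts the existence of $\vec{t'}$; equivalently, one must prove that the bottom line of the proposition is necessary, and not merely sufficient, for $\Lambda\subseteq\Lambda_i$.

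That last step is, I expect, the real obstacle, and presumably the reason the statement is posed as a conjecture. The bottom line constrains each generator $A_i\vec{\mathbf{n}}_j$ of $\Lambda_i$ in isolation and one $\vec{d}_k$-direction at a time, whereas membership $\vec{\mathbf{d}}_k\in\Lambda_i$ is a joint property of all $t$ generators; bridging the two cleanly really calls for putting the integer matrix whose columns are the $A_i\vec{\mathbf{n}}_j$ (in destination-tile units) into Hermite normal form and deciding solvability there. Two technicalities complicate the naive version: when the $\vec{d}_k$ are not mutually orthogonal the coordinates of $A_i\vec{\mathbf{n}}_j$ must be read against the dual basis of $(\vec{d}_1,\dots,\vec{d}_q)$, so the dot products $(A_i\vec{\mathbf{n}}_j)\cdot\vec{d}_k$ appearing in the statement are not quite the relevant quantities; and when $q$ is strictly less than the dimension of the destination space one must separately exclude a component of $A_i\vec{\mathbf{n}}_j$ orthogonal to every $\vec{d}_k$. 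My expectation is that the equivalence holds as written once one assumes the destination tiling frame to be orthonormal (so $\vec{\mathbf{d}}_k=z_k\vec{d}_k$ and the per-direction tests decouple), and that in full generality the clean statement replaces the bottom line by the condition $\Lambda\subseteq\Lambda_i$ for every $i$, checked by a normal-form computation.
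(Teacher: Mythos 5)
This proposition is explicitly left as a conjecture in the paper --- no proof is given --- so there is nothing on the paper's side to compare your argument against. Your proposal is a proof plan rather than a proof, and you candidly flag the converse direction (necessity of the arithmetic test) as unresolved; that honesty is appropriate, but it means the statement remains unproved after your write-up. The lattice reformulation ($\Lambda\subseteq\Lambda_i$ for each $i$, with $\Lambda$ generated by the $\vec{\mathbf{d}}_k$ and $\Lambda_i$ by the $A_i\vec{\mathbf{n}}_j$) is the right way to think about the top line, and your observation that the clean general statement should probably replace the bottom line by a normal-form solvability condition is well taken.

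There is, however, a genuine gap in the direction you present as routine. You assert that the per-pair commensurability conditions ``let one solve $\sum_{j}\delta_j A_i\vec{\mathbf{n}}_j = N(\vec{u'}-\vec{u})$ over $\mathbf{Z}$,'' but they do not: each condition constrains only the scalar $(A_i\vec{\mathbf{n}}_j)\cdot\vec{d}_k$ against $z_k$ in isolation, whereas solvability of the system is a joint property of all the generators. Concretely, take $q=2$ with orthonormal $\vec{d}_1,\vec{d}_2$, tile sizes $z_1=z_2=6$, and suppose $A_i\vec{\mathbf{n}}_1=(2,3)$ and $A_i\vec{\mathbf{n}}_2=(3,2)$ in the $(\vec{d}_1,\vec{d}_2)$ coordinates. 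Every dot product ($2,3,3,2$) divides $6$, so the bottom line of the proposition holds, yet $\vec{\mathbf{d}}_1=(6,0)$ is not an integer combination of $(2,3)$ and $(3,2)$ (the unique rational solution is $\delta_1=-12/5$). So the arithmetic condition, read literally, does not imply $\Lambda\subseteq\Lambda_i$, and your forward implication breaks exactly where you rely on it. This suggests the conjectured equivalence is false as stated (in both directions the per-generator, per-direction test is too weak a proxy for lattice containment), and that any correct version must be formulated through the Hermite-normal-form criterion you mention at the end. A smaller point: your reduction also needs $\vec{t'}=\vec{t}+\vec{\delta}$ to land in $\mathbf{CT}$ (i.e., to be a nonempty tile of the bounded iteration domain), which is not automatic and is glossed over in the membership check $\vec{t'}\in X(\vec{u'})$.
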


If proved, this proposition then establishes a condition on the dependences for
there to be a unique control flow, independent of the tile coordinates, for 
both the MARS to produce (by each producer tile) and the MARS to retrieve (by
each consumer tile).

\section{Related work}
\label{sec:related-work}

This work introduces a partitioning of data arrays and iteration spaces based on
the consumption pattern of each data. Existing work on partitioning aims at
locality in the first place, before memory access optimization. Our work relies
on a locality optimization (tiling) and seeks to further improve memory
accesses. This work is made specific by the combination of its objective 
(partitioning data for spatial locality) and its method (fine-grain 
partitioning where iteration spaces are already partitioned). 

\subsection{Goal of partitioning}
Existing work on partitioning mainly targets locality, such as Agarwal et al.
\cite{Agarwal_1995}. Our work uses the same definitions and follows the same
reasoning as this paper, with a different objective: while \cite{Agarwal_1995}
seeks to adjust the tile size and shapes for locality (i.e. the footprint size
of each tile), we seek to exhibit spatial locality (data contiguity)
opportunities. In that sense, our work is not the first to propose a
partitioning of iteration and data spaces using affine dependences; however, the
desired result (with a spatial locality objective) differs, and so the
construction procedure and hypotheses too.

Parallelism is also an objective: \cite{Zhao_2020} perform iteration and data
space partitioning, then fuse partitions to maximize computation parallelism
while preserving locality. The resulting code is suitable for CPU and GPU
implementation with a cache hierarchy; our partitioning scheme does not follow
the temporal utilization of the retrieved data within a tile. It therefore is
more adapted to scratchpad memories, and scenarios memory accesses can be
decoupled from computations, because grouping data for spatial locality requires
significant on-chip data movement. This makes our approach more suitable for
task-level pipelined (\emph{read, execute, write}) FPGA or ASIC accelerators,
or for small CPU tiles (register tiling) where the register space can be 
considered a scratchpad.

\subsection{Partitioning methods}
Instead of partitioning the inter-tile communicated data with a tiling already
known, one can consider partitioning the inputs and outputs, and deriving tiled
iteration space tiling from the inputs or outputs themselves. This approach is
taken in \cite{Zhao_2020} where the tile shapes are iteratively constructed from
the (tiled) consumers of the iterations or data. 

Monoparametric tiling \cite{Iooss_2021} is performed using an inverse approach
as ours: the data spaces (variables) are partitioned into tiles, and then the
iteration space is partitioned. It requires the program to be represented as a
\emph{system of affine recurrence equations}, where loops do not exist; instead,
iteration spaces start to exist at code generation time, when a variable needs
to be computed. The main difference is that our partitioning scheme must be
applied after loop tiling, and therefore after most locality optimizations.

Dathathri et al. \cite{Dathathri_2013, Bondhugula_2013} partitions the iteration
spaces for inter-node communications in distributed systems, in a manner similar
to MARS: the \emph{flow-out} iterations of each tile are partitioned by
dependences (\emph{dependence polyhedra}) and consumer tiles (\emph{receiving
tiles}). While both approaches are similar with respect to how data is grouped
and transmitted, ours is extended to data space partitioning. Our approach
however adds a restriction on the dependences: we require that the flow-out
partitions are invariant across all tiles, so that a simple, unique control flow
can be derived. Our approach can then be used to create
\emph{position-independent} accelerators that can process any tile in the
iteration space.

It is noteworthy that both our approach and \cite{Dathathri_2013}, along with
other domain-specific inter-node data partitioning schemes (e.g.,
\cite{Zhao_2021}) acknowledge that, to achieve a high bandwidth utilization of
the RAM or network, inter-tile (inter-node) communications need a specific data
layout inferred using static analysis.

\section{Conclusion}

Optimizing programs with respect to memory accesses is a key to improving their
performance. This paper proposes an analysis method to automatically partition
data and iteration spaces from the polyhedral representation of a program when
loop tiling is applied.

Partitioning data arrays is already known to improve spatial locality and, in
turn, access performance. In this paper, we propose a fine-grain partitioning
scheme that can be used to optimize spatial locality.

\bibliographystyle{ACM-Reference-Format}
\bibliography{refs}

\end{document}